\documentclass[leqno,11pt,letter]{amsart}


\usepackage[full]{textcomp}
\usepackage{newpxtext}
\usepackage{times}
\usepackage{cabin} 
\usepackage[varqu,varl]{inconsolata} 
\usepackage[bigdelims,vvarbb]{newpxmath} 
\usepackage[cal=cm,bb=esstix,bbscaled=1.05]{mathalfa} 
\usepackage[bottom=1in,top=1in,left=1.25in,right=1.25in]{geometry}
\setlength{\footskip}{20pt}
\usepackage{savesym}
\let\savedegree\bigtimes
\let\bigtimes\relax
\usepackage{mathabx}
\let\bigtimes\savedegree
\usepackage[usenames,dvipsnames]{color}
\usepackage{hyperref}
\hypersetup{
 colorlinks,
 linkcolor={blue!90!black},
 citecolor={red!80!black},
 urlcolor={blue!50!black}
}
\usepackage{tikz}
\usetikzlibrary{decorations.markings}
\usepackage{mathrsfs}
\usepackage[all,cmtip]{xy}
\usepackage{mleftright}
\usepackage{booktabs}

\usepackage{cite}
\usepackage{enumitem}

\usepackage{graphicx}
\usepackage{subfigure}
\usepackage{caption}
\captionsetup{
font=small,
justification   = justified}
\usepackage{placeins}
\usepackage{comment}
\usepackage[normalem]{ulem} 

\newcommand{\Hilbert}{H}
\newcommand{\Projection}{P}

\newcommand{\la}{\langle}
\newcommand{\ra}{\rangle}
\newcommand{\sgn}{\operatorname{sgn}}
\newcommand{\bds}{\boldsymbol}
\newcommand{\PV}{\operatorname{PV}}
\newcommand{\ph}{\text{phys}}


\setlist[enumerate]{labelsep=*, leftmargin=1.5pc}
\setlist[enumerate]{label=\normalfont(\roman*), ref=\roman*}

\newtheorem{theorem}{Theorem}[section]

\newtheorem{cor}[theorem]{Corollary}

\theoremstyle{definition}

\numberwithin{equation}{section}

\newcommand{\IGNORE}[1]{}
\newcommand{\ignore}[1]{}
\newcommand{\veps}{\varepsilon}

\newcommand{\re}{\operatorname{Re}}
\newcommand{\im}{\operatorname{Im}}

\newcommand{\phm}{\phantom{-}}
\newcommand{\mbb}[1]{\mathbb{#1}}

\newcommand{\mc}[1]{\mathcal{#1}}

\newcommand{\jd}{\displaystyle}

\newcommand{\der}[2]{\frac{\partial #1}{\partial #2}}

\newcommand{\black}[1]{\textcolor{black}{#1}}
\newcommand{\pa}{\partial}
\newcommand{\wtil}{\widetilde}
\newcommand{\e}[1]{{(#1)}}

\usepackage{xcolor}
\usepackage{etoolbox}
\makeatletter
\pretocmd\@bibitem{\color{black}\csname keycolor#1\endcsname}{}{\fail}
\newcommand\citecolor[1]{\@namedef{keycolor#1}{\color{blue}}}
\makeatother


\begin{document}

\author[Jon Wilkening and Xinyu Zhao]{Jon Wilkening and Xinyu Zhao}
\address{Department of Mathematics\\ University of California at
  Berkeley\\Berkeley, CA\\94720\\USA}
\email{wilkening@berkeley.edu}
\email{zhaoxinyu@berkeley.edu}
\thanks{This work was supported in part by the National Science
  Foundation under award number DMS-1716560 and by the Department of
  Energy, Office of Science, Applied Scientific Computing Research,
  under award number DE-AC02-05CH11231.}
\keywords{}
\subjclass[]{}
\title{Quasi-periodic traveling gravity-capillary waves}

\begin{abstract}
  We present a numerical study of spatially quasi-periodic traveling
  waves on the surface of an ideal fluid of infinite depth. This is a
  generalization of the classic Wilton ripple problem to the case when
  the ratio of wave numbers satisfying the dispersion relation is
  irrational. We propose a conformal mapping formulation of the water
  wave equations that employs a quasi-periodic variant of the Hilbert
  transform to compute the normal velocity of the fluid from its
  velocity potential on the free surface. We develop a Fourier
  pseudo-spectral discretization of the traveling water wave equations
  in which one-dimensional quasi-periodic functions are represented
  by two-dimensional periodic functions on the torus. This leads to
  an overdetermined nonlinear least squares problem that we solve
  using a variant of the Levenberg-Marquardt method. We investigate
  various properties of quasi-periodic traveling waves, including
  Fourier resonances, time evolution in conformal space on
    the torus, asymmetric wave crests, capillary wave patterns
  that change from one gravity wave trough to the next without repeating,
  and the dependence of wave speed and surface
  tension on the amplitude parameters that describe a two-parameter
  family of waves.
\end{abstract}


\maketitle
\markboth{J. WILKENING AND X. ZHAO}{QUASI-PERIODIC TRAVELING WATER WAVES}


\section{Introduction}
\label{sec:intro}


Traveling water waves have long played a central role in the field of
fluid mechanics. Following a tradition dating back to Stokes
  \cite{stokes1847,craik:05}, most work on traveling waves has assumed
  periodic boundary conditions; see e.g.~\cite{lamb:hydro,
    milne:thomson:hydro, johnson97, nekrasov1921steady,
    levi1925determination, beale1979existence, toland1985bifurcation,
    jones1989symmetry}.  Solitary water waves that propagate on the
  real line but decay to zero at infinity also have a long history
  \cite{rayleigh:1876} and have been studied extensively
  \cite{friedrichs:57, bslh:76, vandenBroeck:92,
    amick:81, milewski:10, vandenBroeck:book}. A third option is to
  assume spatially quasi-periodic boundary conditions.  These arise
  naturally in many contexts related to water waves, which we briefly
  outline below. However, to date, spatially quasi-periodic water
  waves have only been investigated through weakly nonlinear models
  \cite{bridges1996spatially, zakharov1968stability,
    janssen2003nonlinear, ablowitz2015interacting} or through a
  Fourier-Bloch stability analysis in which the eigenfunctions of the
  linearization about a Stokes wave have a different period than the
  Stokes wave \cite{longuet:78, oliveras:11,
    trichtchenko:16}. No methods currently exist to study the
  long-time evolution of unstable subharmonic perturbations under the
  full water wave equations nor to compute quasi-periodic traveling
  waves beyond the weakly nonlinear regime.  Our goal in this paper
  and its companion \cite{quasi:ivp} is to address this gap and
  develop a mathematical and computational conformal mapping framework
  to study fully nonlinear spatially quasi-periodic water waves,
  focusing here on traveling waves and in \cite{quasi:ivp} on the
  time-dependent initial value problem.

In oceanography, modulational instabilities of periodic narrowband
wavetrains are thought to contribute to the formation of rogue waves
in the open ocean \cite{osborne2000, janssen2003nonlinear}. The
nonlinear dynamics are usually approximated by the nonlinear
Schr\"odinger equation \cite{benney:newell:67, zakharov1968stability}
and the growth of unstable modes is governed by the Benjamin-Feir
instability \cite{benj:feir:67}. Three-dimensional effects of
multi-phase interacting wave trains are also believed to be important
in the growth of unstable modes and rogue wave generation
\cite{bridges2005, onorato2006modulational,
  ablowitz2015interacting}. Along these lines, an interesting open
question is whether wave trains of different wavelength co-propagating
in the same direction might have interesting stability properties. We
present in this paper a method of computing spatially quasi-periodic
traveling wave trains of this type on deep water, leaving the
stability question for future research.

Modulational instabilities of periodic wavetrains bring in unstable
modes that grow exponentially until nonlinear effects become
important. As noted by Osborne et.~al.~\cite{osborne2000}, one expects
Fermi-Pasta-Ulam recurrence in this scenario \cite{berman:FPU:2005}. An example
of such recurrence in the context of standing waves was given by
Bryant and Stiassnie \cite{bryant:stiassnie:94} when the wavelength of the
subharmonic mode is 9 times that of the unperturbed standing wave.  In
such a study, it is essential to account for the nonlinear interaction
of the unstable mode with the carrier wave to understand its
transition back to a nearly recurrent state. If the wavelength of the
perturbation is an irrational multiple of that of the carrier wave,
this is inherently a large-amplitude spatially quasi-periodic dynamics
problem for which weakly nonlinear theory may be insufficient to
maintain accuracy.
%

For larger-amplitude waves, weakly nonlinear theory is not 
an accurate water wave model. The spectral stability of
large-amplitude Stokes waves to subharmonic perturbations has been
studied by Longuet-Higgins \cite{longuet:78}, McLean
\cite{mclean:82}, MacKay and Saffman \cite{mackay:86}, Deconinck and
Oliveras \cite{oliveras:11}, Deconinck et.~al.~\cite{trichtchenko:16},
and many others.  The eigenvalues of the linearized evolution operator
in a Fourier-Bloch stability analysis give growth rates for
small-amplitude subharmonic perturbations. When the growth rate is
positive, our framework for solving the quasi-periodic initial value
problem \cite{quasi:ivp} provides the groundwork needed to account for
nonlinear dynamics once the unstable mode amplitude grows beyond the
realm of validity of the linearization about the Stokes wave.  When
the eigenvalue is zero, the methods of this paper can be used
to follow new branches of quasi-periodic traveling waves
that bifurcate from the main branch of periodic Stokes waves.

Chen and Saffman \cite{chen80a} found wavelength-doubling and
wavelength-tripling bifurcations of this type from finite-amplitude
waves whereas Wilton \cite{wilton1915lxxii, akers2012wilton,
  trichtchenko:16} considered the special case where the bifurcation
occurs at zero-amplitude. Generalizing Wilton's work to the case in
which the linear dispersion relation supports two irrationally related
wave numbers that travel at the same speed, \black{Bridges and Dias
  \cite{bridges1996spatially} used a spatial Hamiltonian structure to
  construct weakly nonlinear approximations of spatially
  quasi-periodic traveling gravity-capillary waves for two special
  cases: deep water and shallow water.  The existence of such waves in
  the fully nonlinear setting is still an open problem.  In this
  paper, we demonstrate their existence numerically and explore their
  properties.}


Beyond long-time dynamics of unstable subharmonic modes and new
branches of traveling waves, spatially quasi-periodic water waves
arise in other ways. Wave forecasting in oceanography is usually based
on Monte Carlo ensemble-averaged sea states, where the surface
elevation is considered as a random variable satisfying certain
probability distributions and the wave spectrum is continuous.  In
numerical simulation \cite{janssen2003nonlinear}, the discretization
of wavenumber space will lead to spatially quasi-periodic waves.
Another way in which spatial and temporal quasi-periodicity can arise
is by approximating the wave dynamics using an integrable model
equation such as NLS, KdV or the Benjamin-Ono equation. These
equations have hierarchies of exact quasi-periodic solutions that
appear when using the inverse scattering transform to represent
solutions \cite{flaschka,dobro:91}. As another example, Torres and
collaborators \cite{torres2003quasiperiodic,torres2006} have
demonstrated that quasi-periodic pattern formation can emerge in a
parametrically driven Faraday wave tank when the container has a
carefully prepared bottom topography. This work was motivated by the
problem of finding an analog of Bloch theory for quasi-crystals in
materials science \cite{shechtman84,levine1984quasicrystals}.

As a starting point for our work, recall the dispersion
relation for linearized traveling gravity-capillary waves in deep
water:
\begin{equation}\label{dispersion_relation}
  c^2 = gk^{-1}+\tau k.
\end{equation}
Here $c$ is the phase speed, $k$ is the wave number, $g$ is the
acceleration due to gravity and $\tau$ is the coefficient of surface
tension. Notice that $c=\sqrt{(g/k)+\tau k}$ has a positive minimum,
denoted by $c_\text{crit}$. For any fixed phase speed
$c>c_\text{crit}$, there are two distinct positive wave numbers
satisfying the dispersion relation (\ref{dispersion_relation}), denoted
$k_1$ and $k_2$. Any traveling solution of the linearized problem with
this speed can be expressed as a superposition of waves with these
wave numbers.
If $k_1$ and $k_2$ are rationally related, the motion is spatially
periodic and corresponds to the well-known Wilton ripples
\cite{wilton1915lxxii, akers2012wilton, trichtchenko:16}.
However, if $k_1$ and $k_2$ are irrationally related, the motion
will be spatially quasi-periodic.

Recently, Berti and Montalto \cite{berti2016quasi} and Baldi
et.~al.~\cite{baldi2018time} have proved the existence of
small-amplitude temporally quasi-periodic gravity-capillary standing
waves using Nash-Moser theory. Using similar techniques,
  Berti~et.~al.~\cite{berti2020traveling} have proved the existence of
  small-amplitude time quasi-periodic traveling gravity-capillary
  waves with constant vorticity; and Feola and Giuliani
  \cite{feola2020trav} have proved existence of time quasi-periodic
  traveling gravity waves without surface tension or vorticity.
  Quasi-periodic traveling waves have a special meaning in the latter
  two papers that does not imply that they evolve without changing
  shape.  All four papers formulate the problem on a spatially periodic domain,
  and it is shown that solutions of the linearized standing wave or
  traveling wave problems can be combined and perturbed to obtain
temporally quasi-periodic solutions of the nonlinear problem.
Following the same philosophy, we look for spatially quasi-periodic
solutions of the traveling water wave equations that are perturbations
of solutions of the linearized problem. The velocity potential can be
eliminated from the Euler equations when looking for traveling
solutions, so our goal is to study traveling waves with height
functions of the form
\begin{equation}\label{quasi_form}
  \eta(\alpha) = \tilde\eta(k_1\alpha,k_2\alpha), \qquad
  \tilde\eta(\alpha_1,\alpha_2)=
  \sum_{(j_1, j_2)\in\mbb Z^2}\hat{\eta}_{j_1, j_2}e^{i(j_1\alpha_1+j_2\alpha_2)}.
\end{equation}
Here $\tilde\eta$ is real-valued and defined on the torus $\mbb
T^2=\mbb R^2/2\pi\mbb Z^2$, and $\alpha$ parametrizes the free surface
in such a way that the fluid domain is the image of the lower
half-plane $\{w=\alpha+i\beta\,:\,\beta<0\}$ under a conformal map
$z(w)$ whose imaginary part on the upper boundary is
$\im\{z\vert_{\beta=0}\}=\eta$.  The leading term here is
$\eta_\text{lin}(\alpha)=2\re\{\hat\eta_{1,0}e^{ik_1\alpha}+
  \hat\eta_{0,1}e^{ik_2\alpha}\}$, which will be a solution of the
linearized problem.

Unlike \cite{bridges1996spatially}, we use a
conformal mapping formulation \cite{dyachenko1996analytical,
  dyachenko1996nonlinear, 
  choi1999exact, dyachenko2001dynamics, zakharov2002new,
  li2004numerical, hunter2016two, dyachenko:S:2019} 
  of the gravity-capillary water wave
problem.  This makes it possible to compute the normal velocity of the
fluid from the velocity potential on the free surface via a
quasi-periodic variant of the Hilbert transform. As in the periodic
case, the Hilbert transform is a Fourier multiplier operator, but now
acts on functions defined on a higher-dimensional torus. In a
companion paper \cite{quasi:ivp}, we use this idea to develop a
numerical method to compute the time evolution of solutions of the
Euler equations from arbitrary quasi-periodic initial data. The
present paper focuses on traveling waves in this framework.

We formulate the traveling wave computation as a nonlinear
least-squares problem and use the Levenberg-Marquardt method to search
for solutions. This approach builds on the overdetermined shooting
methods developed by Wilkening and collaborators
\cite{ambrose2010computation, ambrose2014dependence,
  rycroft2013computation, wilkening2012overdetermined,
  govindjee2014cyclic} to compute standing waves and other
time-periodic solutions. Specifically, we fix the ratio $k_2/k_1$,
denoted by $k$, and solve simultaneously for the phase speed $c$, the
coefficient of surface tension $\tau$, and the unknown Fourier modes
$\hat\eta_{j_1,j_2}$ in (\ref{quasi_form}) subject to the constraint
that $\hat\eta_{1,0}$ and $\hat\eta_{0,1}$ have prescribed values.  In
Section~\ref{sec:num}, we discuss the merits of these bifurcation
parameters over, say, prescribing $\tau$ and $\hat\eta_{1,0}$ and
solving for $\hat\eta_{0,1}$ along with $c$ and the other unknown
Fourier modes.  While the numerical method is general and can be used
to search for solutions for any irrational~$k$, for brevity we present
results only for $k=1/\sqrt2$ and $k = \sqrt{151}$, which
exhibit clear nonlinear interaction between the two component waves.

Because we focus here on quasi-periodic traveling waves that
  persist to zero-amplitude, the left and right branches of the
  dispersion relation (\ref{dispersion_relation}) can be viewed as the
  wave numbers of gravity waves and capillary waves, respectively
  \cite{djordjevic77}. For the ocean, the ratio between them would be
  many orders of magnitude larger than we consider here, so our
  results pertain to much smaller-scale laboratory experiments rather
  than the ocean. Staying within the quasi-periodic Wilton ripple
  framework that begins at small amplitude with the dispersion
  relation (\ref{dispersion_relation}) would be problematic for the
  ocean as increasing $k$ to $10^7$ does not seem likely to lead to
  interesting nonlinear interactions between gravity and capillary
  waves due to their vast separation of scales, and is anyway
  computationally out of reach for our current algorithm.

A more
  promising idea is to look for spatially quasi-periodic gravity waves
  (with negligible surface tension) that bifurcate from finite
  amplitude periodic traveling waves, similar to the
  wavelength doubling and tripling bifurcations found by Chen and
  Saffman \cite{chen80a}.  In this case, both component waves are
  gravity waves and the bifurcation arises due to a nonlinear
  resonance in the Euler equations. We have computed such a
  quasi-periodic bifurcation from the family of $2\pi$-periodic ``pure
  gravity'' Stokes waves at a wave height of 0.809070794 and a wave
  speed of 1.083977047 when $k=1/\sqrt2$.  Details on these
  preliminary results will be given in future work. We also hope to
  extend our results to the case of finite-depth water waves, search
  for quasi-periodic perturbations of overhanging traveling
  gravity-capillary waves \cite{akers:ambrose14}, and study the
  stability of these waves \cite{oliveras:11, trichtchenko:16,
    torres2003quasiperiodic}.

The paper is organized as follows.
In Section~\ref{sec:prelim}, we define a quasi-periodic Hilbert
transform, derive the equations of motion governing quasi-periodic
traveling water waves, and summarize the main results and notation
introduced in \cite{quasi:ivp} on the more general spatially
quasi-periodic initial value problem. 
In Section~\ref{sec:num}, we design a Fourier pseudo-spectral method
to numerically solve the torus version of the quasi-periodic traveling
wave equations. The discretization leads to an overdetermined
nonlinear least-squares problem that we solve using a variant of the
Levenberg-Marquardt method \cite{nocedal,wilkening2012overdetermined}.
In Section~\ref{sec:rslts}, we present a detailed numerical study of a
two-parameter family of quasi-periodic traveling waves with
$k=1/\sqrt2$ and $g=1$ and validate the accuracy of the
  method. We then search for larger-amplitude
  waves with $k=1/\sqrt2$ and $k=1/\sqrt{151}$ and explore the
  computational limits of our implementation.
%
%
In the conclusion section, we summarize the results and discuss
the effects of floating point arithmetic and whether solutions might
exist for rational values of $k$.
Finally, in Appendix~\ref{sec:dyn:trav}, we study the dynamics of
quasi-periodic traveling waves and show that the waves maintain
a permanent form but generally travel at a non-uniform speed in
conformal space in order to travel at constant speed in physical
  space.


\section{Preliminaries} \label{sec:prelim}


As explained above, the primary goal of this paper is to study
spatially quasi-periodic traveling water waves using a
conformal mapping framework.  In this section, we establish notation;
review the properties of the quasi-periodic Hilbert transform; discuss
quasi-periodic conformal maps and complex velocity potentials; and
propose a synthesis of viewpoints between the Hou, Lowengrub and
Shelley formalism for evolving interfaces \cite{HLS94,HLS97} and the
conformal mapping method developed by Dyachenko
  et.~al.~\cite{dyachenko1996analytical} and subsequent authors \cite{
    dyachenko1996nonlinear, choi1999exact, dyachenko2001dynamics,
    zakharov2002new, dyachenko:S:2019}. We also summarize the
one-dimensional and torus versions of the equations of motion for the
spatially quasi-periodic initial value problem \cite{quasi:ivp};
discuss families of 1d quasi-periodic solutions corresponding to a
single solution of the torus version of the problem; derive the
equations governing traveling waves; and review the linear theory of
quasi-periodic traveling waves.


\subsection{Quasi-periodic functions and the Hilbert transform}
\label{sec:gov:eqs}

A function $u(\alpha)$ is quasi-periodic if there exists a
continuous, periodic function $\tilde u(\bds\alpha)$ defined on the
$d$-dimensional torus $\mbb T^d$ such that
\begin{equation}\label{general_quasi_form}
  u(\alpha) = \tilde u(\bds{k} \alpha), \qquad
  \tilde u(\bds\alpha) = \sum_{\bds{j}\in\mathbb{Z}^d}\hat{u}_{\bds{j}}
  e^{i\la\bds{j},\,\bds{\alpha}\ra}, \qquad
  \alpha\in\mbb R, \;\; \bds\alpha,\bds k \in \mathbb{R}^d.
\end{equation}
We generally assume $\tilde u(\bds\alpha)$ is real analytic, which
means the Fourier modes satisfy the symmetry condition $\hat u_{-\bds
  j}=\overline{\hat u_{\bds j}}$ and decay exponentially as $|\bds
j|\rightarrow\infty$, i.e.~$|\hat u_{\bds j}|\le Me^{-\sigma|\bds j|}$
for some $M,\sigma>0$. Entries of the vector $\bds{k}$ are required to
be linearly independent over $\mathbb{Z}$.  Fixing this vector $\bds
k$, we define two versions of the Hilbert transform, one acting on $u$
(the quasi-periodic version) and the other on $\tilde u$ (the torus
  version):
\begin{equation}\label{eq:H:def}
  H[u](\alpha) = \frac{1}{\pi}\PV
  \int_{-\infty}^\infty\frac{u(\xi)}{\alpha-\xi}\,d\xi, \qquad
  H[\tilde u](\bds\alpha) = \sum\limits_{\bds{j}\in\mathbb{Z}^d}
  (-i)\sgn(\la\bds{j},\,\bds{k}\ra) \hat{u}_{\bds{j}} e^{i
    \la\bds{j},\,\bds{\alpha}\ra}.
\end{equation}
Here $\sgn(q)\in\{1,0,-1\}$ depending on whether $q>0$, $q=0$ or
$q<0$, respectively.  Note that the torus version of $H$ is a Fourier
multiplier on $L^2(\mbb T^d)$ that depends on $\bds k$. It is shown in
\cite{quasi:ivp} that
\begin{equation}\label{eq:H:Htil}
  H[u](\alpha)=H[\tilde u](\bds k\alpha),
\end{equation}
and the most general bounded analytic function $f(w)$ in the lower
half-plane whose real part agrees with $u$ on the real axis has the
form
\begin{equation}\label{eq:f:from:u}
  f(w) = \hat u_{\bds 0} + i\hat v_{\bds 0} + \sum_{\la\bds j,\bds k\ra<0}
  2 \hat u_{\bds j}e^{i\la\bds j,\bds k\ra w}, \qquad
  (w=\alpha+i\beta\,,\; \beta\le0),
\end{equation}
where $\hat v_{\bds 0}$ is an arbitrary constant and the sum is over
all $\bds j\in\mbb Z^d$ satisfying $\la\bds j,\bds k\ra<0$. The
imaginary part of $f$ on the real axis is then given by $v=\hat
v_{\bds 0}-H[u]$. Similarly, given $v$, the most general bounded
analytic function $f(w)$ in the lower half-plane whose imaginary part
agrees with $v$ on the real axis has the form (\ref{eq:f:from:u}) with
$u=\hat u_{\bds 0} + H[v]$, where $\hat u_{\bds 0}$ is an arbitrary
constant. This analytic extension is quasi-periodic on slices
of constant depth, i.e.
\begin{equation}\label{eq:f:tilde}
  f(w) = \tilde f(\bds k\alpha,\beta), \qquad
  (w=\alpha+i\beta\,,\;\beta\le0),
\end{equation}
where $\tilde f(\bds\alpha,\beta) = \hat u_{\bds0} + i\hat v_{\bds0} +
\sum_{\la\bds j,\bds k\ra<0} 2[\hat u_{\bds j}e^{-\la\bds j,\bds
    k\ra\beta}] e^{i\la\bds j,\bds\alpha\ra}$ is periodic in
$\bds\alpha$ for fixed $\beta\le0$.  The torus version of the bounded
analytic extension corresponding to $\tilde u(\bds\alpha+\bds\theta)$ is
simply $\tilde f(\bds\alpha+\bds\theta,\beta)$, which has imaginary part
$\tilde v(\bds\alpha+\bds\theta)$ on the real axis. As a result, the
Hilbert transform commutes with the shift operator,
\begin{equation}\label{eq:H:commute}
    H[\tilde u(\cdot+\bds\theta)](\bds\alpha) =
    H[\tilde u](\bds\alpha+\bds\theta),
\end{equation}
which can also be checked directly from (\ref{eq:H:def}).
We also define quasi-periodic and torus versions of two projection
operators,
\begin{equation}\label{eq:proj}
  P = \operatorname{id} - P_0, \qquad
  P_0 [u] = P_0 [\tilde u] = \hat{u}_{\bds{0}}
  = \frac{1}{(2\pi)^d} \int_{\mathbb{T}^d}
  \tilde u(\bds{\alpha})\, d\alpha_1\dots d\alpha_d,
\end{equation}
where $P_0[u]$ is a constant function on $\mbb R$, $P_0[\tilde u]$ is
a constant function on $\mbb T^d$, and $P[u]$ has zero-mean on $\mbb
R$ in the sense that its torus representation, $P[\tilde u]$, which
satisfies $P[u](\alpha)=P[\tilde u](\bds k\alpha)$, has zero mean on
$\mbb T^d$.


\subsection{A quasi-periodic conformal mapping} \label{sec:conformal_mapping}


For the general initial value problem \cite{quasi:ivp}, we consider a
time-dependent conformal map $z(w,t)$ that maps the lower half-plane
\begin{equation}
  \mbb C^- = \{w=\alpha+i\beta\,:\, \alpha\in\mbb R,\,\beta<0\}
\end{equation}
to the fluid domain $\Omega_f(t)$ that lies below the free surface in
physical space.  At each time $t$, we assume $z(w,t)$ extends
continuously to $\overline{\mbb C^-}$, and in fact is analytic on a
slightly larger half-plane $\mbb C^-_\veps=\{w\,:\,\im w<\veps\}$,
where $\veps>0$ could depend on $t$. The free surface $\Gamma(t)$ is
parametrized by
\begin{equation}
  \zeta(\alpha,t)=\xi(\alpha,t)+i\eta(\alpha,t), \quad (\alpha\in\mbb R\,,\,
  t \text{ fixed}), \qquad\quad \zeta = z\vert_{\beta=0}.
\end{equation}
We assume $\alpha\mapsto\zeta(\alpha,t)$ is injective but do not
assume $\Gamma(t)$ is the graph of a single-valued function of $x$
in the derivation.  An example of a time-dependent
spatially quasi-periodic overturning water wave is computed in
\cite{quasi:ivp}. In future work we will study traveling
  quasi-periodic perturbations of the overhanging periodic traveling
  water waves computed by Akers \textit{et.~al.}
  \cite{akers:ambrose14}.

The conformal map is required to remain a bounded distance
from the identity map in the lower half-plane. Specifically, we
require that
\begin{equation}\label{y_boundary}
  |z(w,t)-w|\le M(t) \quad\qquad (w=\alpha+i\beta\,,\; \beta\le0),
\end{equation}
where $M(t)$ is a uniform bound that could vary in time.  The
Cauchy integral formula implies that $|z_w-1|\le M(t)/|\beta|$, so
at any fixed time,
\begin{equation}\label{eq:zw:lim}
  z_w\rightarrow1 \quad \text{ as } \quad \beta\to-\infty.
\end{equation}
In this paper and its companion \cite{quasi:ivp}, we assume $\eta$ has
two spatial quasi-periods, i.e.~at any time it has the form
(\ref{general_quasi_form}) with $d=2$ and $\bds{k}=[k_1,k_2]^T$.  This
is a major departure from previous work \cite{
  meiron1981applications, 
  dyachenko1996analytical, zakharov2002new, dyachenko2016branch},
where $\eta$ is assumed to be periodic. Through non-dimensionalization,
we may set $k_1=1$ and $k_2 = k$, where $k$ is irrational:
\begin{equation}\label{eq:eta:tilde}
    \eta(\alpha,t) = \tilde\eta(\alpha,k\alpha,t), \qquad
    \tilde\eta(\alpha_1,\alpha_2,t) = \sum_{j_1, j_2\in\mbb Z}
  \hat{\eta}_{j_1, j_2}(t)e^{i(j_1\alpha_1+j_2\alpha_2)}.
\end{equation}
Here $\hat{\eta}_{-j_1, -j_2}(t) = \overline{\hat{\eta}_{j_1,j_2}(t)}$
since $\tilde\eta(\alpha_1,\alpha_2,t)$ is real-valued. Since
$w\mapsto[z(w,t)-w]$ is bounded and analytic on $\mbb C^-$ and its imaginary
part agrees with $\eta$ on the real axis, there is a real number $x_0$
(possibly depending on time) such that
\begin{equation}\label{eq:xi:from:eta}
  \xi(\alpha, t) = \alpha + x_0(t) + \Hilbert[\eta](\alpha, t), \qquad
  \xi_\alpha(\alpha, t) = 1 + \Hilbert[\eta_\alpha](\alpha, t).
\end{equation}
We use a tilde to denote the periodic functions on the torus
that correspond to the quasi-periodic parts of $\xi$, $\zeta$ and
$z$,
\begin{equation}\label{eq:xi:zeta}
  \begin{gathered}
    \xi(\alpha,t) = \alpha + \tilde\xi(\alpha,k\alpha,t), \qquad
    \zeta(\alpha,t) = \alpha + \tilde\zeta(\alpha,k\alpha,t), \\
    z(\alpha+i\beta,t)=\left(\alpha+i\beta \right)+
    \tilde z(\alpha,k\alpha,\beta,t),
    \qquad (\beta\le 0).
  \end{gathered}
\end{equation}
Specifically, $\tilde\xi = x_0(t) + \Hilbert[\tilde\eta]$,
$\tilde\zeta= \tilde\xi + i\tilde\eta$, and
\begin{equation}\label{eq:z:tilde}
  \tilde z(\alpha_1,\alpha_2,\beta,t) =
  x_0(t)+i\hat\eta_{0,0}(t)+\sum_{j_1+j_2k<0}
  \left(2i\hat\eta_{j_1,j_2}(t)e^{-(j_1+j_2k)\beta}\right)
  e^{i(j_1\alpha_1+j_2\alpha_2)}.
\end{equation}
Since the modes $\hat\eta_{j_1,j_2}$ are assumed to decay
exponentially, there is a uniform bound $M(t)$ such that $|\tilde
z(\alpha_1,\alpha_2,\beta,t)|\le M(t)$ for $(\alpha_1,\alpha_2)\in\mbb
T^2$ and $\beta\le 0$.  In \cite{quasi:ivp}, we show that as long as
the free surface $\zeta(\alpha,t)$ does not self-intersect at a given
time $t$, the mapping $w\mapsto z(w,t)$ is an analytic isomorphism of
the lower half-plane onto the fluid region.

\subsection{The complex velocity potential and equations of motion
for the initial value problem}\label{sec:gov:ivp}

Adopting the notation of \cite{quasi:ivp}, let $\Phi^\ph(x,y,t)$
denote the velocity potential in physical space and let
$W^\ph(x+iy,t)= \Phi^\ph(x,y,t)+i\Psi^\ph(x,y,t)$ denote the complex
velocity potential, where $\Psi^\ph$ is the stream function.  Using
the conformal mapping $z(w,t)$, we pull back these functions to the
lower half-plane and define
\begin{equation*}
  W(w,t) = \Phi(\alpha,\beta,t)+i\Psi(\alpha,\beta,t) =
  W^\ph(z(w,t),t), \qquad\quad (w=\alpha+i\beta).
\end{equation*}
We also define
\begin{equation}
  \varphi=\Phi\vert_{\beta=0}, \qquad\quad
  \psi=\Psi\vert_{\beta=0}.
\end{equation}
We assume $\varphi$ is quasi-periodic with the same quasi-periods as
$\eta$,
\begin{equation}\label{eq:phi:tilde}
  \varphi(\alpha, t) = \tilde\varphi(\alpha,k\alpha,t), \qquad
  \tilde\varphi(\alpha_1,\alpha_2,t) =
  \sum_{j_1, j_2\in\mathbb{Z}}
  \hat{\varphi}_{j_1, j_2}(t) e^{i(j_1\alpha_1+j_2\alpha_2)}.
\end{equation}
The fluid velocity $\nabla\Phi^\ph(x,y,t)$ is assumed to decay to zero
as $y\rightarrow-\infty$ (since we work in the lab frame).
Since $dW/dw=(dW^\ph/dz)(dz/dw)$, it follows from (\ref{eq:zw:lim})
that $dW/dw\rightarrow0$ as $\beta\to-\infty$. Thus,
\begin{equation}\label{hilbert_phi}
  \psi_\alpha = -H[\varphi_\alpha], \qquad
  \psi(\alpha, t) = -\Hilbert[\varphi](\alpha, t).
\end{equation}
Here we have assumed $P_0[\varphi] = \hat{\varphi}_{0,0}(t) = 0$ and
$P_0[\psi] = \hat{\psi}_{0,0}(t) = 0$, which is allowed since $\Phi$
and $\Psi$ can be modified by additive constants (or functions of time
  only) without affecting the fluid motion.

Let $U$ and $V$ denote the normal and tangential velocities of the
curve parametrization, respectively; let
$s_\alpha=|\zeta_\alpha|=(\xi_\alpha^2+\eta_\alpha^2)^{1/2}$ denote
the rate at which arclength increases as the curve
$\alpha\mapsto\zeta(\alpha,t)$ is traversed; and let $\theta$ denote
the tangent angle of the curve relative to the horizontal. Then
\begin{equation}
  \zeta_\alpha = s_\alpha e^{i\theta}, \qquad \zeta_t = (V+iU)e^{i\theta}.
\end{equation}
Tracking a fluid particle $x_p(t)+iy_p(t)= \zeta(\alpha_p(t),t)$ on
the free surface, we find that
\begin{equation*}
  \dot x_p=\xi_\alpha\dot\alpha_p+\xi_t=\Phi^\ph_x, \qquad\quad
  \dot y_p=\eta_\alpha\dot\alpha_p+\eta_t=\Phi^\ph_y.
\end{equation*}
Eliminating $\dot\alpha_p$ gives the kinematic condition
\begin{equation}\label{eq:kinematic0}
  U = \zeta_t\cdot\bds{\hat n} = \nabla\Phi^\ph\cdot\bds{\hat n},
\end{equation}
where $\bds{\hat n}=(-\eta_\alpha,\xi_\alpha)/s_\alpha$ is the outward
unit normal to $\Gamma$ and we have identified $\zeta_t$ with the
vector $(\xi_t,\eta_t)$ in $\mbb R^2$.  The general philosophy
proposed by Hou, Lowengrub and Shelley (HLS) \cite{HLS94,HLS97} is
that while (\ref{eq:kinematic0}) constrains the normal velocity $U$ of
the curve to match that of the fluid, the tangential velocity $V$ can
be chosen arbitrarily to improve the mathematical properties of the
representation or the accuracy and stability of the numerical
scheme. Whereas HLS propose choosing $V$ to keep $s_\alpha(t)$
independent of $\alpha$, we interpret the work of Dyachenko
et.~al.~\cite{dyachenko1996analytical} and subsequent authors
\cite{choi1999exact, dyachenko2001dynamics, dyachenko:S:2019,
  zakharov2002new} as choosing $V$ to maintain a conformal
representation. Briefly, since $\Phi^\ph$ and $\Psi^\ph$ satisfy the
Cauchy-Riemann equations, we have
\begin{equation}\label{eq:kinematic}
  -\frac{\psi_\alpha}{s_\alpha} =
  -\frac{\Psi^\ph_x\xi_\alpha + \Psi^\ph_y\eta_\alpha}{s_\alpha} =
  \frac{\Phi^\ph_y\xi_\alpha - \Phi^\ph_x\eta_\alpha}{s_\alpha} =
  \nabla\Phi^\ph\cdot \bds{\hat n} = U.
\end{equation}
Assuming $z_t/z_\alpha$ is bounded and analytic in the lower
half-plane (justified below),
\begin{equation}
  \frac{z_t}{z_\alpha}\bigg\vert_{\beta=0} =
\frac{\zeta_t}{\zeta_\alpha} =
\frac{V+iU}{s_\alpha} \quad
\Rightarrow \quad
\frac{V}{s_\alpha} = H\left(\frac{U}{s_\alpha}\right)+C_1 =
-H\left(\frac{\psi_\alpha}{s_\alpha^2}\right)+C_1,
\end{equation}
where $C_1$ is an arbitrary constant (in space) that we are free to choose.
For any differentiable function $\alpha_0(t)$,
  replacing $C_1(t)$ by $C_1(t)-\alpha_0'(t)$ will cause a reparametrization
  of the solution with $\alpha$ replaced by $\alpha-\alpha_0(t)$; see
  Appendix~\ref{sec:dyn:trav}.  The
tangential and normal velocities can be rotated back to obtain $\xi_t$
and $\eta_t$ via
\begin{equation}\label{eq:xi:t:eta:t}
  \begin{pmatrix}
    \xi_t \\ \eta_t
  \end{pmatrix} =
  \begin{pmatrix}
    \xi_\alpha & -\eta_\alpha \\
    \eta_\alpha & \xi_\alpha
  \end{pmatrix} 
  \begin{pmatrix}
    V/s_\alpha \\ U/s_\alpha
  \end{pmatrix},
\end{equation}
which can be interpreted as the real and imaginary parts of the
complex multiplication $\zeta_t =
(\zeta_\alpha)(\zeta_t/\zeta_\alpha)$. As explained in
\cite{quasi:ivp}, the first equation of (\ref{eq:xi:t:eta:t}) is
automatically satisfied if the second equation holds and $\xi$ is
reconstructed from $\eta$ via (\ref{eq:xi:from:eta}), provided
$x_0(t)$ satisfies
\begin{equation}\label{eq:x0:evol}
  \frac{dx_0}{dt} = P_0\left[
    \xi_\alpha \frac V{s_\alpha} - \eta_\alpha \frac U{s_\alpha} \right].
\end{equation}
The equations of motion for water waves in the conformal framework may
now be written
\begin{equation}\label{general_conformal}
  \begin{gathered}
    \xi_\alpha = 1 + \Hilbert[\eta_\alpha], \qquad
    \psi = -\Hilbert[\varphi], \qquad
    J = \xi_\alpha^2 + \eta_\alpha^2, \qquad
    \chi = \frac{\psi_\alpha}{J}, \\
    \text{
        choose $C_1$ (see below), \qquad
        compute\, $\jd\frac{dx_0}{dt}$\, in
      (\ref{eq:x0:evol}) if necessary,} \\
    \eta_t = -\eta_\alpha \Hilbert[\chi] - \xi_\alpha\chi +
    C_1\eta_\alpha, \qquad
    \kappa = \frac{\xi_\alpha\eta_{\alpha\alpha} -
      \eta_\alpha\xi_{\alpha\alpha}}{J^{3/2}}, \\[-4pt]
    \varphi_t = P\bigg[\frac{\psi_\alpha^2 - \varphi_\alpha^2}{2J} -
    \varphi_\alpha\Hilbert[\chi] + C_1\varphi_\alpha - g\eta +
    \tau\kappa\bigg],
  \end{gathered}
\end{equation}
where the last equation comes from the unsteady Bernoulli equation and
the Laplace-Young condition for the pressure. These equations govern
the evolution of $x_0$, $\eta$ and $\varphi$. The full curve
$\zeta=\xi+i\eta$ and its analytic extension $z$ to the lower
half-plane can be reconstructed from $\eta$ using
(\ref{eq:xi:from:eta}) and (\ref{eq:xi:zeta}). Doing so ensures that
$z$ is injective and that $z_t/z_\alpha$ remains bounded in the lower
half-plane provided that the free surface does not self-intersect and
$J$ remains nonzero on the surface; see \cite{quasi:ivp} for details.

As noted in \cite{quasi:ivp}, equations (\ref{general_conformal}) can
be interpreted as an evolution equation for the functions
$\tilde\zeta(\alpha_1,\alpha_2,t)$ and
$\tilde\varphi(\alpha_1,\alpha_2,t)$ on the torus $\mbb T^2$.  The
$\alpha$-derivatives are replaced by the directional derivatives
$[\pa_{\alpha_1}+k\pa_{\alpha_2}]$ and the quasi-periodic Hilbert
transform is replaced by its torus version, i.e.~$H[\tilde u]$ in
(\ref{eq:H:def}) above. The pseudo-spectral method proposed in
\cite{quasi:ivp} is based on this representation. A convenient choice
of $C_1$ is
\begin{equation}\label{eq:C1:opt2}
  C_1 = \left[\Hilbert\left(\frac{\tilde\psi_\alpha}{\tilde J}\right) -
    \frac{\tilde\eta_\alpha\tilde\psi_\alpha}{
      (1+\tilde\xi_\alpha)\tilde J}
    \right]_{(\alpha_1,\alpha_2)=(0,0)},
\end{equation}
which causes $\tilde\xi(0,0,t)$ to remain constant in time,
alleviating the need to evolve $x_0(t)$ explicitly.  Here $\tilde
J=(1+\tilde\xi_\alpha)^2 + \tilde\eta_\alpha^2$,
and all instances of $\xi_\alpha$ in (\ref{general_conformal})
must be replaced by
\begin{equation}
  \widetilde{\xi_\alpha} = 1 + \tilde\xi_\alpha
\end{equation}
since the secular growth term $\alpha$ is not part of $\tilde\xi$ in
(\ref{eq:xi:zeta}).  Using (\ref{eq:xi:from:eta}) and (\ref{eq:xi:zeta}),
$\tilde\zeta$ is completely determined by $x_0(t)$ and $\tilde\eta$,
so only these have to be evolved --- the formula for $\tilde\xi_t$ in
(\ref{eq:xi:t:eta:t}) is redundant as long as (\ref{eq:x0:evol}) is
satisfied. Other choices of $C_1$ are considered in
  Appendix~\ref{sec:dyn:trav}.

It is shown in \cite{quasi:ivp} that solving the torus version
of (\ref{general_conformal}) yields a three-parameter family of
one-dimensional solutions of the form
\begin{equation}\label{eq:family:full}
  \begin{aligned}
    \zeta(\alpha,t\,;\,\theta_1,\theta_2,\delta) &= \alpha +
    \delta + \tilde\zeta(
      \theta_1+\alpha,\theta_2+k\alpha,t), \\
    \varphi(\alpha,t\,;\,\theta_1,\theta_2) &= \tilde\varphi(
      \theta_1+\alpha,\theta_2+k\alpha,t),
  \end{aligned} \qquad
  \left( \begin{aligned}
      \alpha\in\mbb R, \, t\ge0 \\
      \theta_1,\theta_2,\delta\in\mbb R
    \end{aligned}\right).
\end{equation}
The parameters $(\theta_1,\theta_2,\delta)$ lead to the same solution
as $(0,\theta_2-k\theta_1,0)$ up to a spatial phase shift and
$\alpha$-reparametrization.  Thus, every solution is equivalent
to one of the form
\begin{equation}\label{eq:smaller:family}
  \begin{aligned}
    \zeta(\alpha,t\,;\,0,\theta,0) &=
    \alpha + \tilde\zeta(\alpha,\theta+k\alpha,t), \\
    \varphi(\alpha,t\,;\,0,\theta) &=
    \tilde\varphi(\alpha,\theta+k\alpha,t)
  \end{aligned}
  \qquad
  \alpha\in\mbb R\,,\;
  t\ge 0\,,\;
  \theta\in[0,2\pi).
\end{equation}
Within this smaller family, two values of $\theta$ lead to
equivalent solutions if they differ by $2\pi(n_1k+n_2)$ for some
integers $n_1$ and $n_2$. This equivalence is due to
solutions ``wrapping around'' the torus with a spatial shift,
\begin{equation}\label{eq:wrap:around}
  \zeta(\alpha+2\pi n_1,t\,;\, 0,\theta,0) =
  \zeta(\alpha,t\,;\,0,\theta+2\pi(n_1k+n_2),2\pi n_1), \quad
  \big(\alpha\in[0,2\pi),\;n_1\in\mbb Z\big).
\end{equation}
Here $n_2$ is chosen so that $0\le[\theta+2\pi(n_1k+n_2)]<2\pi$ and we
used periodicity of $\zeta(\alpha,t\,;\,\theta_1,\theta_2,\delta)$
with respect to $\theta_1$ and $\theta_2$.

It is shown in \cite{quasi:ivp} that if all the waves in the family
(\ref{eq:smaller:family}) are single-valued and have no vertical
tangent lines, there is a corresponding family of solutions of the
Euler equations in a standard graph-based formulation
\cite{zakharov1968stability, CraigSulem, johnson97} that are
quasi-periodic in physical space.

\subsection{Quasi-periodic traveling water waves}
\label{sec:gov:trav}

We now specialize to the case of quasi-periodic traveling waves and
derive the equations of motion in a conformal mapping framework.  One
approach (see e.g.~\cite{milewski:10} for the periodic case) is to
write down the equations of motion in a graph-based representation of
the surface variables $\eta^\ph(x,t)$ and
$\varphi^\ph(x,t)=\Phi^\ph(x,\eta(x,t),t)$ and substitute
$\eta^\ph_t = -c\eta^\ph_x$, $\varphi^\ph_t =
-c\varphi^\ph_x$ to solve for the initial condition of a solution of
the form
\begin{equation}\label{eq:trav:graph}
  \eta^\ph(x,t)=\eta^\ph_0(x-ct), \qquad
  \varphi^\ph(x,t)=\varphi^\ph_0(x-ct).
\end{equation}
We present below an alternative derivation of the equations in
\cite{milewski:10} that is more direct and does not assume the wave
profile is single-valued. Other systems of equations have also been
derived to describe traveling water waves, e.g.~by Nekrasov
\cite{nekrasov1921steady,milne:thomson:hydro} and Dyachenko
\emph{et.}~\emph{al.} \cite{dyachenko2016branch}.

Recall the kinematic condition (\ref{eq:kinematic}) that the normal
velocity of the curve is given by $\zeta_t\cdot\bds{\hat n} = U =
-\psi_\alpha/s_\alpha$. Since the wave travels at constant speed $c$
in physical space, there is a reparametrization $\beta(\alpha,t)$ such
that $\zeta(\alpha,t)=\zeta(\beta(\alpha,t),0)+ct$.  Since
$\zeta_\alpha$ is tangent to the curve, the normal velocity is simply
$\zeta_t\cdot\bds{\hat n}=(c,0)\cdot\bds{\hat
  n}=-c\eta_\alpha/s_\alpha$, where we used $\bds{\hat
  n}=(-\eta_\alpha,\xi_\alpha)/s_\alpha$. We conclude that
\begin{equation}\label{phi_psi_eta_xi}
  \psi_\alpha = c\eta_\alpha, \qquad \varphi_\alpha = \Hilbert[\psi_\alpha] =
  c\Hilbert[\eta_\alpha] = c(\xi_\alpha -1).
\end{equation}
This expresses $\psi$ and $\varphi$ (up to additive constants) in
terms of $\eta$ and $\xi=\alpha+x_0+H[\eta]$, leaving only $\eta$ to
be determined.  As in the graph-based approach of
(\ref{eq:trav:graph}) above, it suffices to compute the initial wave
profile at $t=0$ to know the full evolution of the traveling wave
under (\ref{general_conformal}); however, the wave generally travels
at a non-uniform speed in conformal space in order to travel at
constant speed in physical space. This is demonstrated
  in Section~\ref{sec:time:evol} and proved in
  Appendix~\ref{sec:dyn:trav}.

The two-dimensional velocity potential $\Phi^\ph(x,y,t)$ may be
assumed to exist even if the traveling wave possesses overhanging
regions that cause the graph-based representation via $\eta^\ph(x,t)$
and $\varphi^\ph(x,t)$ to break down.
In a moving frame traveling at
constant speed $c$ along with the wave, the free surface will be a
streamline. Let $\breve z=z-ct$ denote position in the moving frame
and note that the complex velocity potential picks up a background
flow term, $\breve W^\ph(\breve z,t)=W^\ph(\breve z+ct,t)-c\breve z$,
and becomes time-independent. We drop $t$ in the notation and define
$\breve W(w)=\breve W^\ph(\breve z(w))$, where $\breve z(w)=z(w,0)$
conformally maps the lower half-plane onto the fluid region of this
stationary problem. We assume $W^\ph(\breve z(\alpha),0)$ is
quasi-periodic with exponentially decaying mode amplitudes, so
\begin{equation*}
  |\breve W(w)+cw|\le|W^\ph(\breve z(w),0)|+c|\breve z(w)-w|
\end{equation*}
is bounded in the lower half-plane. Since the stream function
$\im\{\breve W^\ph(\breve z)\}$ is constant on the free surface, we
may assume $\im\{\breve W(\alpha)\}=0$ for $\alpha\in\mbb R$. The
function $\im\{\breve W(w)+cw\}$ is then bounded and harmonic in the
lower half-plane and satisfies homogeneous Dirichlet boundary
conditions on the real line, so it is zero \cite{axler:harmonic}. Up
to an additive real constant,
\begin{equation}
  \breve W(w)=-cw.
\end{equation}
Thus, $|\breve\nabla\breve\Phi^\ph|^2=|\breve W'(w)/\breve
z'(w)|^2=c^2/J$. Since the free surface is a streamline in
the moving frame, the steady Bernoulli equation
$(1/2)|\breve\nabla\breve\Phi^\ph|^2+g\eta+p/\rho=C$
together with the Laplace-Young condition $p=p_0-\rho\tau\kappa$
on the pressure gives
\begin{equation}\label{eq:trav:conf}
  \begin{gathered}
    \xi_\alpha = 1 + \Hilbert[\eta_\alpha], \quad\quad
    J = \xi_\alpha^2 + \eta_\alpha^2, \\
    \kappa = \frac{\xi_\alpha\eta_{\alpha\alpha} -
      \eta_\alpha\xi_{\alpha\alpha}}{J^{3/2}}, \quad\quad
    \Projection\left[\frac{c^2}{2J} + g\eta -\tau\kappa\right] = 0,
  \end{gathered}
\end{equation}
which is the desired system of equations for $\eta$.

In the quasi-periodic traveling wave problem, we seek a solution of
(\ref{eq:trav:conf}) of the form (\ref{eq:eta:tilde}), except that
$\tilde\eta$ and its Fourier modes will not depend on time.
Like the initial value problem, (\ref{eq:trav:conf}) can be
interpreted as a nonlinear system of equations for
$\tilde\eta(\alpha_1,\alpha_2)$ defined on $\mbb T^2$, where the
$\alpha$-derivatives are replaced by $[\pa_{\alpha_1} + k\pa_{\alpha_2}]$
and the Hilbert transform is replaced by its torus version in
(\ref{eq:H:def}). Without loss of generality, we
assume
\begin{equation}\label{eq:hat:eta00}
  \hat{\eta}_{0,0} = 0.
\end{equation}
We also assume that $\tilde\eta$ is an even, real function of
$(\alpha_1,\alpha_2)$ on $\mbb T^2$. Hence, in our setup, the
Fourier modes of $\tilde\eta$ satisfy
\begin{equation} \label{fourier_mode_symmetry}
  \hat{\eta}_{-j_1, -j_2} = \overline{\hat{\eta}}_{j_1, j_2}, \quad\quad
  \hat{\eta}_{-j_1, -j_2} = \hat{\eta}_{j_1, j_2}, \quad \quad
      (j_1, j_2)\in \mathbb{Z}^2.
\end{equation}
This implies that all the Fourier modes $\hat{\eta}_{j_1, j_2}$ are
real, and causes $\eta(\alpha)=\tilde\eta(\alpha,k\alpha)$ to be
even as well, which is compatible with the symmetry of
(\ref{eq:trav:conf}).  However, as in (\ref{eq:family:full}),
there is a larger family of quasi-periodic traveling solutions
embedded in this solution, namely
\begin{equation}\label{eq:trav:fam:eta}
  \eta(\alpha;\theta) = \tilde\eta(\alpha,\theta+k\alpha).
\end{equation}
As in (\ref{eq:wrap:around}), two values of $\theta$ lead to
equivalent solutions (up to $\alpha$-reparametrization and a
spatial phase shift) if they differ
by $2\pi(n_1k+n_2)$ for some integers $n_1$ and $n_2$. In general,
$\eta(\alpha-\alpha_0;\theta)$ will not be an even function of
$\alpha$ for any choice of $\alpha_0$ unless $\theta=2\pi(n_1k+n_2)$
for some integers $n_1$ and $n_2$. 
In the periodic case, symmetry
breaking traveling water waves have been found by Zufiria
\cite{zufiria:thesis}, though most of the literature is devoted to
periodic traveling waves with even symmetry.

\subsection{Linear theory of quasi-periodic traveling waves.}
\label{linear_theory}
Linearizing (\ref{eq:trav:conf}) around the trivial solution
$\eta(\alpha) = 0$, we obtain,
\begin{equation}\label{trivial_linearization}
  c^2\Hilbert[\delta{\eta}_\alpha] - g\delta{\eta} +
  \tau\delta{\eta}_{\alpha\alpha} = 0,
\end{equation}
where $\delta\eta$ denotes the variation of $\eta$. Substituting
(\ref{quasi_form}) into (\ref{trivial_linearization}), we
obtain a resonance relation for the Fourier modes of $\delta\eta$:
\begin{equation}\label{resonance_fourier}
  \Big(c^2|j_1k_1+j_2k_2| - g - \tau(j_1k_1+j_2k_2)^2\Big)
  \widehat{\delta\eta}_{j_1,j_2} = 0, \quad
  (j_1, j_2) \in \mathbb{Z}^2.
\end{equation}
Note that $j_1k_1+j_2k_2$, which appears in the exponent of the
  Fourier plane wave representation (\ref{quasi_form}), plays the
role of $k$ in the dispersion relation
(\ref{dispersion_relation}). Many families of quasi-periodic
  traveling wave solutions bifurcate from the
  trivial solution even after specifying $k_1$ and $k_2$. Selecting a branch
  amounts to choosing two of the modes
  $\widehat{\delta\eta}_{j_1,j_2}$ to bring in at linear order and
  setting the others to zero in (\ref{resonance_fourier}). In this
  paper, we focus on the case in which $\hat\eta_{1,0}$ and
  $\hat\eta_{0,1}$ enter linearly. This
gives the first-order resonance conditions
\begin{equation}\label{resonance_conformal}
  c^2k_1 - g - \tau k_1^2 = 0, \quad \quad c^2k_2 - g - \tau k_2^2 = 0,
\end{equation}
where $k_1=1$ and $k_2=k$ in our non-dimensionalized setting.
For right-moving waves, we then have $c=\sqrt{g/k_1 + g/k_2}$
  and $\tau=g/(k_1k_2)$.  Any superposition of waves with
dimensionless wave numbers $k_1=1$ and $k_2=k$ traveling with speed
$c=c_\text{lin}$ will solve the linearized problem (\ref{trivial_linearization})
for $\tau=\tau_\text{lin}$. Here we have introduced the notation
$c_\text{lin}=\sqrt{g+g/k}$ and $\tau_\text{lin}=g/k$ to facilitate
the discussion of nonlinear effects below.


\section{Numerical Method}
\label{sec:num}


Equations (\ref{eq:trav:conf}) involve computing derivatives and
Hilbert transforms of quasi-periodic functions that arise in
intermediate computations. Let $f(\alpha)$ denote one of these
functions, and let $\tilde f$ denote the corresponding periodic
function on the torus,
\begin{equation}\label{quasi_f_form}
  f(\alpha) = \tilde f(\alpha,k\alpha), \qquad
  \tilde f(\alpha_1,\alpha_2) = 
  \sum\limits_{j_1, j_2 \in \mathbb{Z}}
  \hat{f}_{j_1, j_2} e^{i(j_1\alpha_1+j_2\alpha_2)}, \qquad
  (\alpha_1, \alpha_2)\in\mathbb{T}^2.
\end{equation}
Each $\tilde f$ that arises is represented by its values on
a uniform $M_1\times M_2$ grid on the torus $\mbb T^2$,
\begin{equation}
  \tilde f_{m_1,m_2} = \tilde f(2\pi m_1/M_1\,,\,2\pi m_2/M_2), \qquad
  (0\le m_1<M_1\,,\,0\le m_2<M_2).
\end{equation}
Products, powers and quotients in (\ref{eq:trav:conf}) are
evaluated pointwise on the grid while derivatives and the
Hilbert transform are computed in Fourier space via
\begin{equation} \label{quasi_deriv_hilbert}
  \begin{gathered}
    \wtil{f_\alpha} (\alpha_1, \alpha_2) = \sum\limits_{j_1, j_2\in \mathbb{Z}}
    i(j_1 + j_2 k)
    \hat{f}_{j_1, j_2} e^{i(j_1\alpha_1+j_2\alpha_2)}, \\
    \wtil{\Hilbert[f]} (\alpha_1, \alpha_2) = \sum\limits_{j_1, j_2\in \mathbb{Z}}
            (-i)\text{sgn}(j_1 + j_2 k) \hat{f}_{j_1, j_2}
            e^{i(j_1\alpha_1+j_2\alpha_2)}.
  \end{gathered}
\end{equation}
We use the `r2c' version of the 2d FFTW library to rapidly
compute the forward and inverse transform given by
\begin{equation}
  \hat f_{j_1,j_2} = \frac1{M_2}\sum_{m_2=0}^{M_2-1}
  \left(\frac1{M_1}\sum_{m_1=0}^{M_1-1}
    \tilde f_{m_1,m_2} e^{-2\pi ij_1m_1/M_1}\right)e^{-2\pi ij_2m_2/M_2}, \quad
  \left(\begin{gathered}
      0\le j_1\le M_1/2 \\
      -M_2/2< j_2\le M_2/2
    \end{gathered}\right).
\end{equation}
The FFTW library actually returns the index
range $0\le j_2<M_2$, but we use $\hat f_{j_1,j_2-M_2}=\hat
f_{j_1,j_2}$ to de-alias the Fourier modes and map the indices
$j_2>M_2/2$ to their correct negative values. The missing entries with
$-M_1/2<j_1<0$ are determined implicitly by
\begin{equation}
  \hat f_{-j_1,-j_2}=\overline{\hat f_{j_1,j_2}}.
\end{equation}
When computing $f_\alpha$ and $\Hilbert[f]$ via
(\ref{quasi_deriv_hilbert}), the Nyquist modes with $j_1=M_1/2$ or
$j_2=M_2/2$ are set to zero, which ensures that the `c2r' transform
reconstructs real-valued functions $\wtil{f_\alpha}$ and
$\wtil{\Hilbert[f]}$ from their Fourier modes.
Further details on this pseudo-spectral representation
are given in \cite{quasi:ivp} in the context of timestepping
the dynamic equations (\ref{general_conformal}).

This pseudo-spectral representation of
  quasi-periodic functions can be generalized to functions with
  quasi-periods larger than two. In this case, one could still use the
  'r2c' and 'c2r' routines in the FFTW library where the function is
  represented by a $d$-dimensional array of Fourier coefficients:
\begin{equation*}
  \tilde{f}_{m_1, m_2, \cdots, m_d} = 
  \sum_{j_1=0}^{M_1-1} 
  \cdots \sum_{j_d=0}^{M_d-1} \hat{f}_{j_1, j_2, \cdots, j_d} 
  e^{2\pi i j_dm_d/M_d} 
  \cdots e^{2\pi i j_1m_1/M_1},
\end{equation*}
where $\tilde{f}_{m_1, m_2, \cdots, m_d}=\tilde f(2\pi
  m_1/M_1,\dots,2\pi m_d/M_d)$ is the value of $\tilde{f}$ evaluated
on a uniform $M_1\times M_2\times\cdots \times M_d$ grid on
$\mathbb{T}^d$.

In \cite{wilkening2012overdetermined}, an overdetermined shooting
algorithm based on the Levenberg-Marquardt method \cite{nocedal} was
proposed for computing standing water waves accurately and
efficiently. Here we adapt this method to compute quasi-periodic
traveling waves instead of standing waves. We first formulate the
problem in a nonlinear least-squares framework.  We consider $\tau$,
$c^2$ (which we denote as $b$) and $\eta$ as unknowns in
(\ref{eq:trav:conf}) and define the residual function
\begin{equation}
  \mathcal{R}[\tau, b, \hat\eta] := P\left[\frac{b}{2\tilde J}
    + g\tilde\eta -\tau \tilde\kappa \right].
\end{equation}
Here, $\hat\eta$ represents the Fourier modes of $\eta$, which are
assumed real via (\ref{fourier_mode_symmetry}); $J$ and $\kappa$
depend on $\eta$ through the auxiliary equations of
(\ref{eq:trav:conf}); and a tilde indicates that
the function is represented on the torus, $\mbb T^2$, as in
(\ref{quasi_f_form}).  We also define the objective function
\begin{equation}
  \mathcal{F} [\tau, b, \hat\eta] := \frac{1}{8\pi^2}\int_{\mathbb{T}^2}
  \mathcal{R}^2[\tau, b, \hat\eta]\,\, d\alpha_1\,d\alpha_2.
\end{equation}
Note that solving (\ref{eq:trav:conf}) is equivalent to finding a zero
of the objective function $\mathcal{F}[\tau, b, \hat\eta]$. The
parameter $k$ in (\ref{quasi_f_form}) is taken to be a fixed,
irrational number when searching for zeros of $\mc{F}$.

In the numerical computation, we truncate the problem to finite
dimensions by varying only the leading Fourier modes
$\hat{\eta}_{j_1, j_2}$ with $|j_1|\le N_1$ and $|j_2|\le N_2$.
We evaluate the residual $\mc R$ (and compute the Fourier
  transforms) on an $M_1\times M_2$ grid, where $M_i\ge2N_i+2$.
The resulting nonlinear least squares problem is
overdetermined because we zero-pad the Fourier modes $\hat{\eta}_{j_1,
  j_2}$ when $|j_1|$ or $|j_2|$ is larger than $N_1$ or
  $N_2$, respectively. Assuming the $\hat\eta_{j_1,j_2}$ are real
(i.e.~that $\eta$ is even) also reduces the number of unknowns
relative to the number of equations, which are enumerated by the
$M_1M_2$ gridpoints without exploiting symmetry.  Guided by the
linear theory of Section \ref{linear_theory}, we fix the two base
Fourier modes $\hat{\eta}_{1,0}$ and $\hat{\eta}_{0,1}$ at nonzero
amplitudes, chosen independently, and minimize $\mc F$ over
  the remaining unknowns via the Levenberg-Marquardt algorithm.

It might seem more natural to prescribe $\tau$ and $\hat{\eta}_{1,0}$
and solve for $\hat{\eta}_{0,1}$ along with $b=c^2$ and the other
unknown Fourier modes of $\eta$. However, since
  $\tau=\tau_\text{lin}=g/k$ is a constant within the linear
  approximation, deviation of $\tau$ from $\tau_\text{lin}$ is a
  higher-order nonlinear effect. This will be confirmed in
  Figure~\ref{tau_c_plot} of Section~\ref{sec:trav_rslts} below. As a
  result, $\tau$ is a poor choice for a continuation parameter near
  the trivial solution in the same way that solving
  $x^2-y^2=(\tau-\tau_\text{lin})$ for $x(\tau,y)$ or $y(\tau,x)$
  leads to problems of existence, uniqueness, and sensitive dependence
  on $\tau$ near $\tau_\text{lin}$.  Beyond the linear regime, one can
  choose any two parameters among $\tau$, $b$ and the Fourier modes
  $\hat\eta_{j_1,j_2}$ to use as continuation parameters.  How well
  they work will depend on the invertibility and condition number of
  the Fr\'echet derivative of $\mc R$ with respect to the remaining
  variables, using the implicit function theorem.
  We also note that the existence of time quasi-periodic water waves
  has only been established rigorously when $\tau$ belongs to a
  Cantor-like set \cite{berti2016quasi, baldi2018time, berti2020traveling}.
  It is possible that small divisors
  \cite{plotnikov01,iooss05,berti2016quasi} and ``near resonances'' in
  the quasi-periodic traveling wave problem will prevent these
  solutions from existing in smooth families.
  
The Levenberg-Marquardt solver requires a linear ordering of the
unknowns.  We enumerate the $\hat\eta_{j_1,j_2}$ so that
lower-frequency modes appear first. As the ``shell index'' $s$ ranges
from 1 to $\max(N_1,N_2)$, we enumerate all the index pairs
$(j_1,j_2)$ with $\max(|j_1|,|j_2|)=s$ before increasing $s$. Within
shell $s$, we proceed clockwise, along straight lines through the
lattice, from $(0,s)$ to $(s,s)$ to $(s,-s)$ to $(1,-s)$.  The other
Fourier modes are known from (\ref{eq:hat:eta00}) and
(\ref{fourier_mode_symmetry}). If $N_1\ne N_2$, we omit
  $(j_1,j_2)$ in the enumeration if $j_1>N_1$ or $j_2>N_2$.  The
total number of modes $\hat\eta_{j_1,j_2}$ indexed in this way is
\begin{equation}
  N_\text{tot} = N_1(2N_2+1)+N_2.
\end{equation}
We replace
$\hat\eta_{1,0}$ by $\tau$ and $\hat\eta_{0,1}$ by $b$ in the list of
unknowns to avoid additional shuffling of the variables when the
prescribed base modes are removed from the list.  Eventually there
are $N_\text{tot}$ parameters to compute, shown here for the
case that $N_2\ge N_1\ge2$:
\begin{equation}\label{eq:p:enum}
  p_1=\tau, \quad p_2=\hat\eta_{1,1}, \quad
  p_3=b, \quad p_4=\hat\eta_{1,-1}, \quad
  p_5=\hat\eta_{0,2}\;\;,\;\; \dots\;\;,\;\;
  p_{N_\text{tot}}=\hat\eta_{1,-N_2}.
\end{equation} 
Re-ordering the arguments of $\mc R$ and $\mc F$, our goal is to
  find $p$ given $\hat\eta_{1,0}$ and $\hat\eta_{0,1}$ such that $\mc
  R[p;\hat\eta_{1,0},\hat\eta_{0,1}]=0$ and $\mc
  F[p;\hat\eta_{1,0},\hat\eta_{0,1}]=0$.  The objective function
$\mathcal{F}$ is evaluated numerically by the trapezoidal rule
approximation over~$\mbb T^2$, which is spectrally accurate:
\begin{equation} \label{numerical_objective}
  \begin{aligned}
    f(p) &= \frac{1}{2} r(p)^Tr(p) \approx
    \mathcal{F}\left[p;\hat\eta_{1,0},\hat\eta_{0,1} \right], \\[5pt]
    r_m(p) &= \frac{\mathcal{R}\left[p;\hat\eta_{1,0},\hat\eta_{0,1}\right]
      (\alpha_{m_1}, \alpha_{m_2})}{\sqrt{M_1M_2}},
  \end{aligned}
    \quad \left(\begin{gathered}
      m = 1+m_1+M_1m_2 \\
      \alpha_{m_i} = 2\pi m_i/M_i
    \end{gathered}\right),
       \quad 0\leq m_i < M_i.
\end{equation}
The parameters $p_j$ are chosen to minimize $f(p)$ using the
Levenberg-Marquardt method \cite{wilkening2012overdetermined,nocedal}. The
method requires a Jacobian matrix $\partial r_m/\partial p_j$,
which we compute by solving the following variational equations:
\begin{equation}\label{eq:variational:eq}
  \begin{gathered}
    \delta{\xi}_\alpha = \Hilbert[\delta{\eta}_\alpha], \qquad\quad
    \delta{J} = 2\left(\xi_\alpha\delta{\xi}_\alpha  +
      \eta_\alpha\delta{\eta}_\alpha \right), \\
    \delta{\kappa} =  -\frac{3}{2}\kappa\frac{\delta J}{J}
    + \frac{1}{J^{3/2}}
    \Big(\delta{\xi}_\alpha\eta_{\alpha\alpha}+
      \xi_\alpha\delta{\eta}_{\alpha\alpha}-
      \delta{\eta}_\alpha\xi_{\alpha\alpha}-
      \eta_{\alpha}\delta{\xi}_{\alpha\alpha}\Big),\\
    \delta{\mathcal{R}} = P\left[\frac{\delta b}{
        2\tilde J}  -
      \frac{1}{2\tilde J^2} b \wtil{\delta{J}} + g\wtil{\delta{\eta}} -
      \delta\tau\tilde\kappa - \tau\wtil{\delta{\kappa}}\right].
  \end{gathered}
\end{equation}
In the last equation, as before, a tilde denotes the torus
  version of a quasi-periodic function. We then have $\der{r_m}{p_j} =
\delta\mc R(\alpha_{m_1},\alpha_{m_2})/\sqrt{M_1M_2}$,
where $m=1+m_1+M_1m_2$ and the
$j$th column of the Jacobian corresponds to setting the perturbation
$\delta\tau$, $\delta b$ or $\delta\hat\eta_{j_1,j_2}$ corresponding
to $p_j$ in (\ref{eq:p:enum}) to 1 and the others to 0.

Like Newton's method, the Levenberg-Marquardt method generates a
sequence of approximate solutions $p^\e0$, $p^\e1$, etc.,~which
terminate when the residual drops below the desired tolerance or fails
to decrease sufficiently. 
If $\max(|\hat\eta_{1,0}|,|\hat\eta_{0,1}|)\le0.01$, we find that
the solution of the linearized problem serves as a good
  initial guess:
\begin{equation} \label{linear_solution}
  \begin{gathered}
    \tilde\eta^\e0(\alpha_1,\alpha_2) =
      \hat\eta_{1,0}(e^{i\alpha_1}+e^{-i\alpha_1})
    + \hat\eta_{0,1}(e^{i\alpha_2}+e^{-i\alpha_2}), \\
    \tau^\e0 = \tau_\text{lin} = g/k, \quad\quad
    b^\e0 = c_\text{lin}^2 = g+g/k.
  \end{gathered}
\end{equation}
We compute larger-amplitude solutions beyond the applicability
  of linear theory using numerical continuation to explore
one-dimensional slices (or paths) through the two-dimensional family
of quasi-periodic traveling waves holding either the ratio
  $\gamma=\hat\eta_{1,0}/\hat\eta_{0,1}$ fixed or one of the modes
  $\hat\eta_{1,0}$, $\hat\eta_{0,1}$ fixed.
  We find that linear extrapolation
from the previous two solutions on a path works well as the starting
guess for the next Levenberg-Marquardt solve.  Details of our
Levenberg-Marquardt implementation, including stopping criteria and a
strategy for delaying the re-computation of the Jacobian, are given in
\cite{wilkening2012overdetermined}.


\section{Numerical Results} \label{sec:rslts}

\subsection{Spatially quasi-periodic traveling waves} \label{sec:trav_rslts}

We now present a detailed numerical study of solutions of
(\ref{eq:trav:conf}) with $k=1/\sqrt2$ and $g=1$ on three
continuation paths corresponding to $\gamma\in\{5,1,0.2\}$, where
$\gamma=\hat\eta_{1,0}/\hat\eta_{0,1}$ is the amplitude ratio of the
prescribed base modes. In each case, we vary the larger of
$\hat\eta_{1,0}$ and $\hat\eta_{0,1}$ from $0.001$ to $0.01$ in
increments of $0.001$.  The initial guess for the first two solutions
on each path are obtained using the linear approximation
(\ref{linear_solution}), which by (\ref{eq:p:enum}) corresponds to
\begin{equation}\label{compute_linear_solution}
    p^\e0_1=\tau^\e0=\sqrt2, \qquad p^\e0_3=b^\e0=1+\sqrt2, \qquad
    p^\e0_j=0, \quad j\not\in\{1,3\}.
\end{equation}
As noted already, the amplitudes $\hat\eta_{1,0}$ and $\hat\eta_{0,1}$
are prescribed --- they are not included among the unknowns.  The
initial guess for the remaining 8 solutions on each continuation path
are obtained from linear extrapolation from the previous two computed
solutions. In all cases, we use $M=60$ for the grid size and
$N=24$ for the Fourier cutoff in each dimension, where we drop the
  subscripts when $M_1=M_2$ and $N_1=N_2$. The nonlinear least-squares
problem involves $M^2=3600$ equations in $N_\text{tot}=1200$ unknowns.

\begin{figure}
\includegraphics[width=\textwidth]{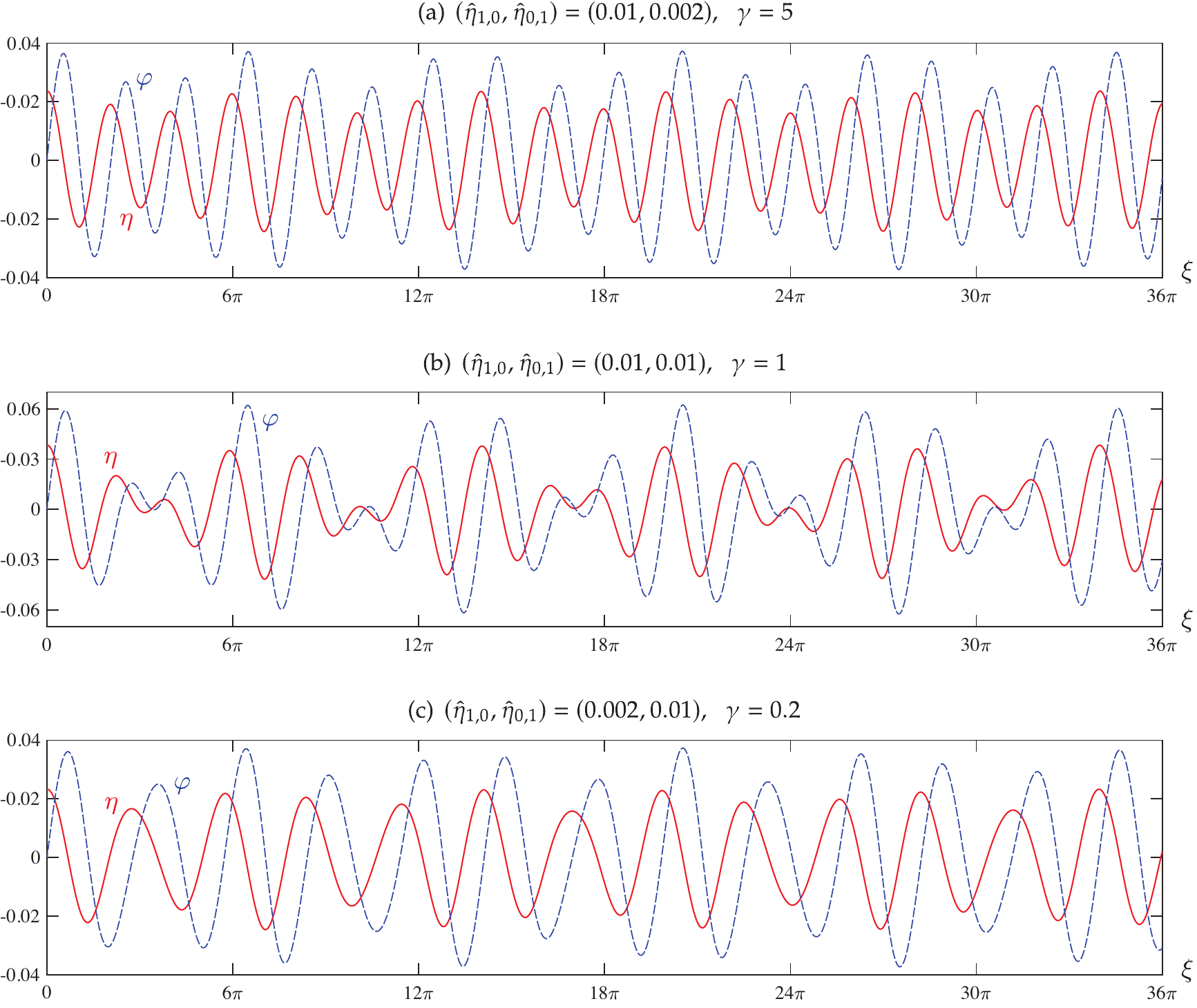}
\caption{\label{initial_plot} Spatially quasi-periodic traveling
  solutions in the lab frame at $t = 0$.  The wave height
  $\eta(\alpha)$ (solid red line) and velocity potential
  $\varphi(\alpha)$ (dashed blue line) are plotted parametrically
  against $\xi(\alpha)$ to show the wave in physical space.}
\end{figure}

Figure~\ref{initial_plot} shows the initial conditions $\eta$ and
$\varphi$ for the last solution on each continuation path (with
  $\max\{\hat\eta_{1,0}\,,\,\hat\eta_{0,1}\}=0.01$).  Panels (a), (b)
and (c) correspond to $\gamma=5,$ $1$, and $0.2$, respectively.  The
solution in all three cases is quasi-periodic, i.e.~$\eta$ and
$\varphi$ never exactly repeat themselves; we plot the solution from
$x=0$ to $x=36\pi$ as a representative snapshot.  For these three
solutions, the objective function $f$ in (\ref{numerical_objective}),
which is a squared error, was
minimized to $6.05\times 10^{-28}$, $9.28\times 10^{-28}$ and
$4.25\times 10^{-28}$, respectively, with
similar or smaller values for lower-amplitude solutions on each path.
For each of the 30 solutions computed on these paths, only
  one Jacobian evaluation and 3--5 $f$ evaluations were needed
  to achieve roundoff-error accuracy.
In our computations, $\eta$ and $\varphi$ are
represented by $\tilde\eta(\alpha_1, \alpha_2)$ and
$\tilde\varphi(\alpha_1, \alpha_2)$, which are defined on the torus
$\mathbb{T}^2$.  In Figure \ref{contour_plot}, we show contour plots
of $\tilde\eta(\alpha_1, \alpha_2)$ and $\tilde\varphi(\alpha_1,
  \alpha_2)$ corresponding to the final solution on each
path. Following the dashed lines through $\mbb T^2$ in
Figure~\ref{contour_plot} leads to the plots in
Figure~\ref{initial_plot}. By construction in (\ref{fourier_mode_symmetry}),
$\tilde\eta(-\bds\alpha)=\tilde\eta(\bds\alpha)$ while
$\tilde\varphi(-\bds\alpha)=-\tilde\varphi(\bds\alpha)$.

\begin{figure}
\includegraphics[width=.8\textwidth]{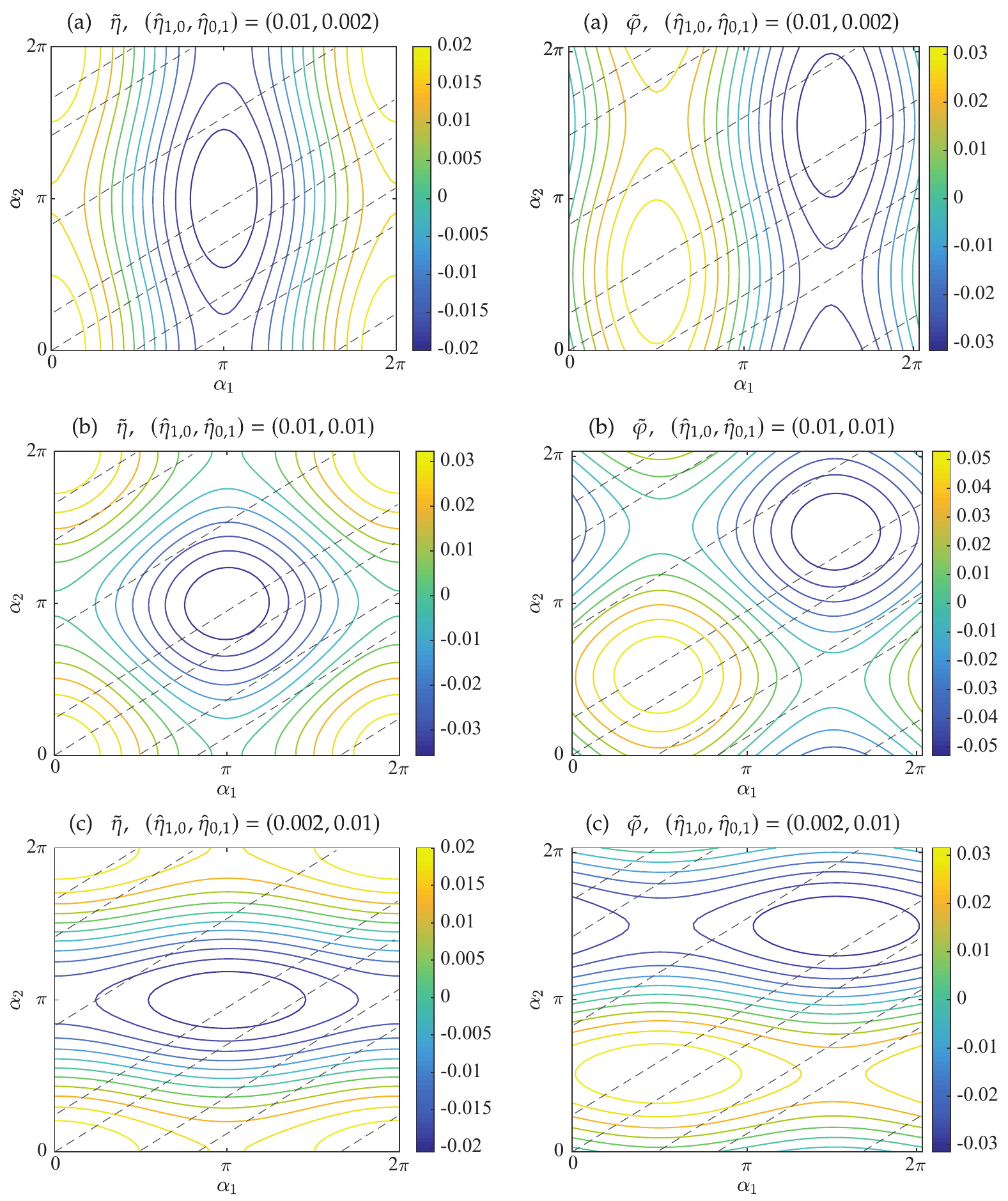}
\caption{\label{contour_plot} Contour plots of $\tilde\eta$ and
  $\tilde\varphi$ on $\mathbb{T}^2$. The dashed lines show $(\alpha,
    k\alpha)$ and its periodic images with $0\le\alpha\le10\pi$ and
  $k=1/\sqrt{2}$.  Evaluating $\tilde\eta$ and $\tilde\varphi$ at
  these points gives $\eta$ and $\varphi$ in (\ref{eq:eta:tilde}) and
  (\ref{eq:phi:tilde}), which were plotted in Figure~\ref{initial_plot}.}
\end{figure}

The amplitude ratio, $\gamma := \hat{\eta}_{1,0} / \hat{\eta}_{0,1}$,
determines the bulk shape of the solution. If $\gamma\gg1$, the component wave
with wave number 1 will be dominant; if $\gamma\ll1$, the component wave with
wave number $k=1/\sqrt2$ will be dominant; and if $\gamma$ is close to
1, both waves together will be dominant over higher-frequency Fourier
modes (at least in the regime we study here). This is demonstrated
with $\gamma=5$, $1$ and $0.2$ in panels (a), (b) and (c) of
Figure~\ref{initial_plot}. Panels (a) and (c) show a clear dominant
mode with visible variations in the amplitude. The oscillations are
faster in panel (a) than in (c) since $1>k\approx0.707$. By contrast,
in panel (b), there is no single dominant wavelength.

This can also be understood from the contour plots of
Figure~\ref{contour_plot}. In case (a), $\gamma\gg1$ and the contour
lines of $\tilde\eta$ and $\tilde\varphi$ are perturbations of
sinusoidal waves depending only on~$\alpha_1$. The unperturbed waves
would have vertical contour lines.  The $\alpha_2$-dependence of the
perturbation causes local extrema to form at the crest and trough. As
a result, the contour lines join to form closed curves that are
elongated vertically since the dominant variation is in the $\alpha_1$
direction. Case (c) is similar, but the contour lines are elongated
horizontally since the dominant variation is in the $\alpha_2$
direction. Following the dashed lines in Figure~\ref{contour_plot}, a
cycle of $\alpha_1$ is completed before a cycle of $\alpha_2$ (since
  $k<1$).  In case (a), a cycle of $\alpha_1$ traverses the dominant
variation of $\tilde\eta$ and $\tilde\varphi$ on the torus, whereas in
case (c), this is true of $\alpha_2$. So the waves in
Figure~\ref{initial_plot} appear to oscillate faster in case (a) than
case (c).  In the intermediate case (b) with $\gamma=1$, the contour
lines of the crests and troughs are nearly circular, but not perfectly
round.  The amplitude of the waves in Figure~\ref{initial_plot} are
largest when the dashed lines in Figure~\ref{contour_plot} pass near
the extrema of $\tilde\eta$ and $\tilde\varphi$, and are smallest when
the dashed lines pass near the zero level sets of $\tilde\eta$ and
$\tilde\varphi$.

Next we examine the behavior of the Fourier modes that make up these
solutions. Figure~\ref{fourier_plot} shows two-dimensional plots of
the Fourier modes $\hat\eta_{j_1,j_2}$ for the 3 cases above, with
$\gamma\in\{5,1,0.2\}$ and $\max\{\hat\eta_{1,0},\hat\eta_{0,1}\}=
0.01$. Only the prescribed modes and the modes that were optimized by
the solver (see (\ref{eq:p:enum})) are plotted, which have indices in
the range $0\le j_1\le N$ and $-N\le j_2\le N$, excluding $j_2\le0$
when $j_1=0$. The other modes are determined by the symmetry of
(\ref{fourier_mode_symmetry}) and by zero-padding
$\hat\eta_{j_1,j_2}=0$ if $N<j_1\le M/2$ or $N<|j_2|\le M/2$.  We used
$N=24$ and $M=60$ in all 3 calculations. One can see that the fixed
Fourier modes $\hat{\eta}_{1,0}$ and $\hat{\eta}_{0,1}$ are the two
highest-amplitude modes in all three cases. In this sense, our
solutions of the nonlinear problem (\ref{eq:trav:conf}) are
small-amplitude perturbations of the solutions
  (\ref{linear_solution}) of the linearized problem. However,
in the plots of Figure~\ref{fourier_plot}, there are many
active Fourier modes other than the four modes $e^{\pm
    i\alpha_1}$, $e^{\pm i \alpha_2}$ from linear theory. In this sense,
  these solutions have left the linear regime.
Carrying out a weakly nonlinear Stokes expansion to high enough order
to accurately predict all these modes would be difficult due to
  the two-dimensional array of unknown Fourier modes, which would
  complicate the analysis of the periodic Wilton ripple problem
  \cite{vandenBroeck:book, trichtchenko:16, akers2020wilton}.
Steeper waves that are well outside of the linear regime
  will be computed in Section~\ref{sec:large:grav:cap}.

\begin{figure}
\includegraphics[width=\textwidth]{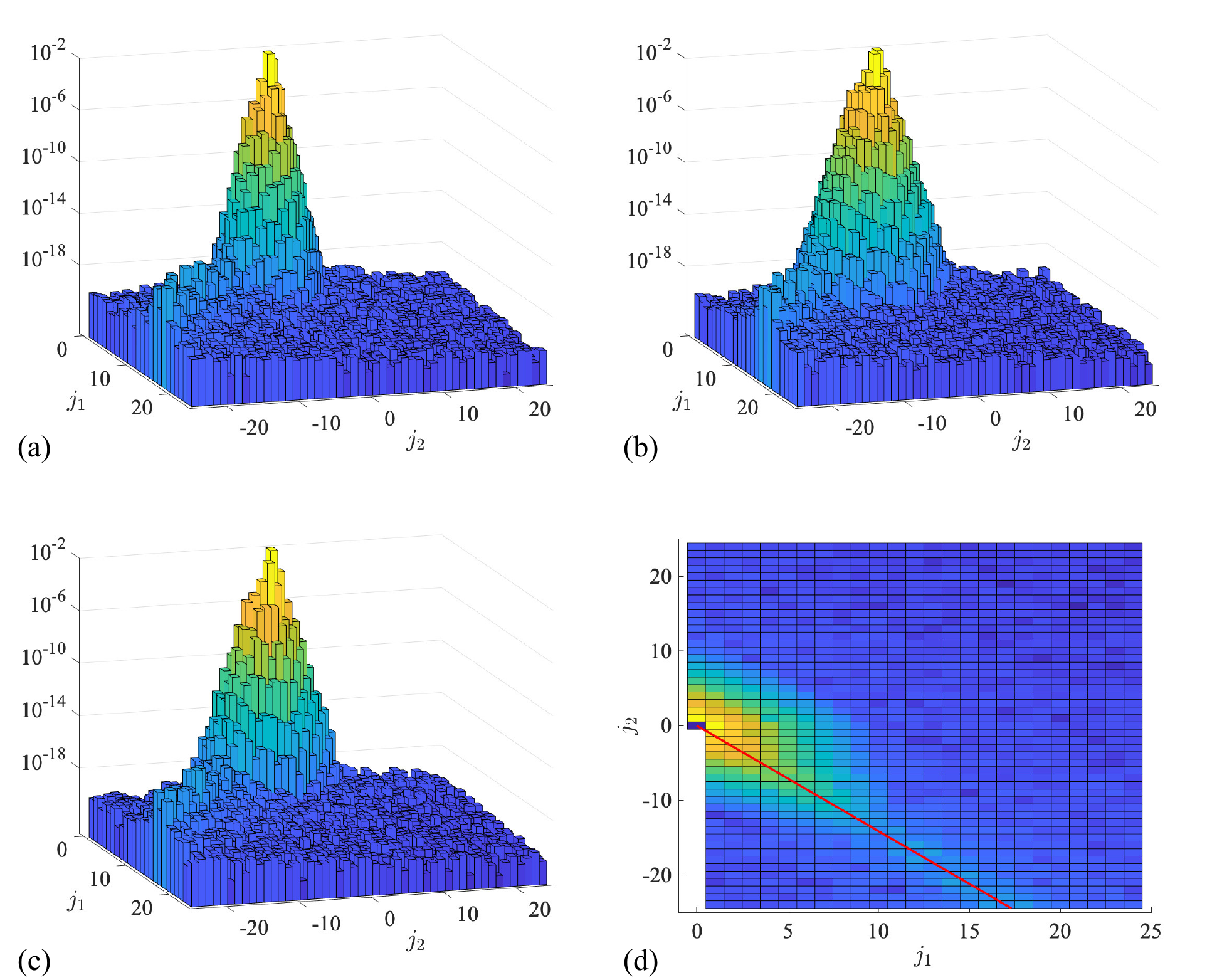}
\caption{\label{fourier_plot} Two-dimensional Fourier modes of
  $\tilde\eta$ for the $k=1/\sqrt2$ solutions plotted in
  Figures~\ref{initial_plot} and~\ref{contour_plot}.  (a) $\gamma=5$.
  (b,d) $\gamma=1$. (c) $\gamma=0.2$. In all three cases, the modes
  decay visibly slower along the line $j_1+j_2k=0$, indicating the
  presence of resonant mode interactions.}
\end{figure}

In panels (a), (b) and (c) of Figure~\ref{fourier_plot}, the modes
appear to decay more slowly in one direction than in other
directions. This is seen more clearly when viewed from above, as shown
in panel (d) for the case of $\gamma=1$. (The other two cases are
  similar). The direction along which the modes decay less rapidly
appears to coincide with the line $\{(j_1,j_2)\;:\;j_1+j_2k=0\}$,
which is plotted in red. A partial explanation is that when $j_1+j_2k$
is close to zero, the corresponding modes $e^{i(j_1+j_2k)\alpha}$ in
the expansion of $\eta(\alpha)$ in (\ref{eq:eta:tilde}) have very long
wavelengths. Slowly varying perturbations lead to small changes in the
residual of the water wave equations, so these modes are not strongly
controlled by the governing equations (\ref{eq:trav:conf}). We believe
this would lead to a small divisor problem that would complicate a
rigorous proof of existence of quasi-periodic traveling water waves.
Similar small divisor problems arise in proving the
  existence of standing water waves \cite{plotnikov01,iooss05}, 3D
  traveling gravity waves \cite{rard2009small}, and 2D time
  quasi-periodic gravity-capillary waves \cite{berti2016quasi,
    baldi2018time, berti2020traveling}, where small divisors
  are tackled using a Nash-Moser iterative scheme.

Next we show that $\tau$ and $c$ depend nonlinearly on the amplitude
of the Fourier modes $\hat{\eta}_{1,0}$ and $\hat{\eta}_{0,1}$.
Panels (a) and (b) of Figure~\ref{tau_c_plot} show plots of $\tau$ and
$c$ versus $\hat\eta_\text{max}:=\max(\hat\eta_{1,0},\hat\eta_{0,1})$
for 9 values of $\gamma=\hat\eta_{1,0}/\hat\eta_{0,1}$, namely
$\gamma=0.1, 0.2, 0.5, 0.8, 1, 1.25, 2, 5, 10$. On each curve,
$\hat\eta_\text{max}$ varies from 0 to $0.01$ in increments of
$0.001$. At small amplitude, linear theory predicts $\tau=g/k=1.41421$
and $c=\sqrt{g(1+1/k)}=1.55377$.  This is represented by the black
marker at $\hat\eta_\text{max}=0$ in each plot. For each value
  of $\gamma$, the curves $\tau$ and $c$ are seen to have zero slope
  at $\hat\eta_\text{max}=0$, and can be concave up or concave down
  depending on $\gamma$. This can be understood from the contour plots
  of panels (e) and (f). Both $\tau$ and $c$ appear to be even
  functions of $\hat\eta_{1,0}$ and $\hat\eta_{0,1}$ when the other is
  held constant. Both plots have a saddle point at the origin, are
  concave down in the $\hat\eta_{1,0}$ direction holding
  $\hat\eta_{0,1}$ fixed, and are concave up in the $\hat\eta_{0,1}$
  direction holding $\hat\eta_{1,0}$ fixed. The solid lines in the
  first quadrant of these plots are the slices corresponding to the
  values of $\gamma$ plotted in panels (a) and (b). The concavity of
  the 1d plots depends on how these lines intersect the saddle in the
  2d plots.

\begin{figure}
\includegraphics[width=\textwidth]{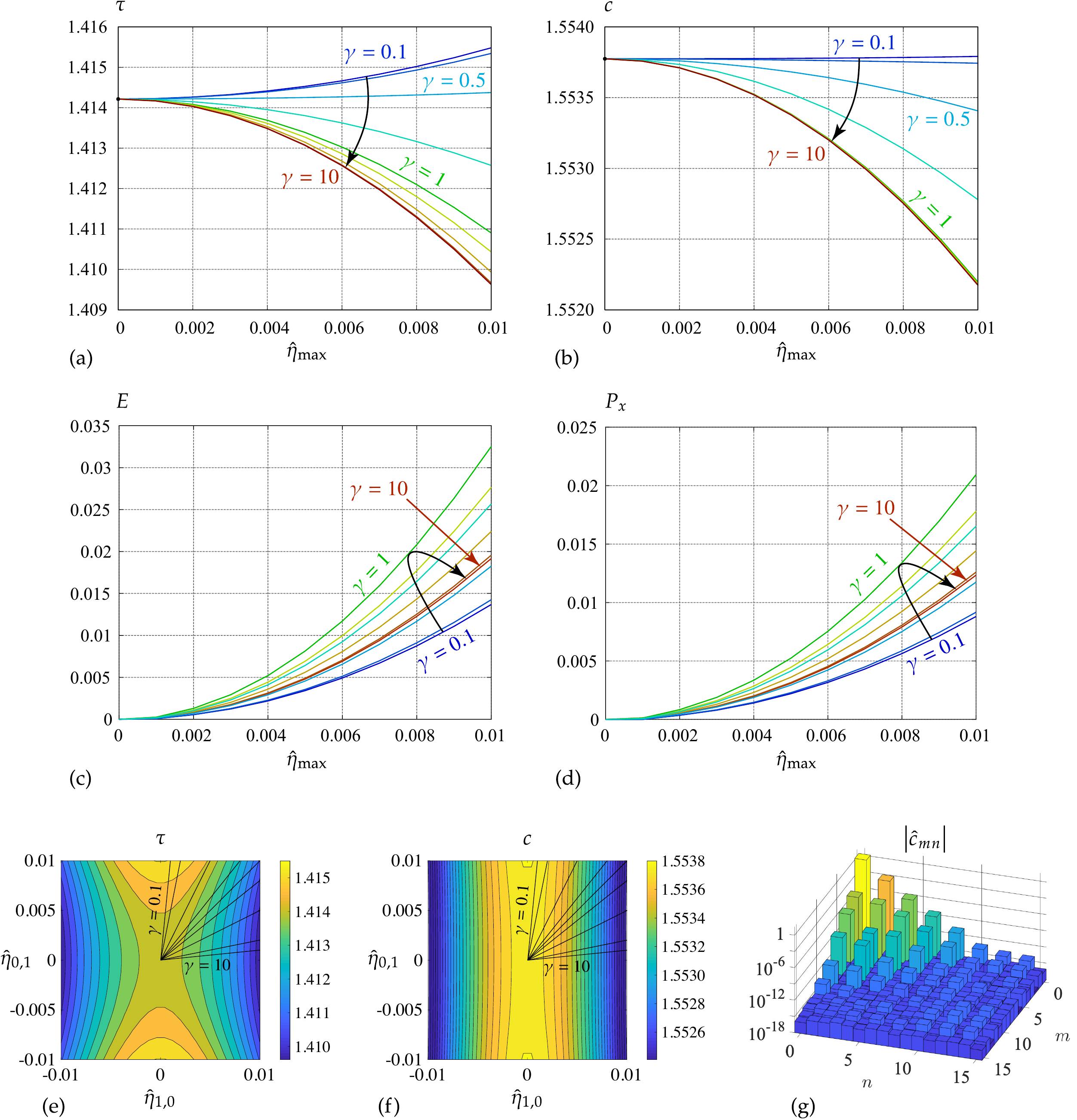}
\caption{\label{tau_c_plot} Surface tension, wave speed, energy
    and momentum of small-amplitude quasi-periodic water waves with
    $k=1/\sqrt2$.  (a,b,c,d) Plots of $\tau$, $c$, $E$ and
  $P_x$ versus
  $\hat{\eta}_{\max}=\max\{\hat\eta_{1,0},\hat\eta_{0,1}\}$
  holding $\gamma=\hat\eta_{1,0}/\hat\eta_{0,1}$ fixed. The
  black arrow in each plot shows how the curves
  change as $\gamma$ increases from $0.1$ to $10$.
  (e,f)
  Contour plots of $\tau$ and $c$ and the rays of constant $\gamma$
  corresponding to (a,b). (g) Mode amplitudes of a 2d Chebyshev
  expansion of $c(\hat\eta_{1,0},\hat\eta_{0,1})$ over the rectangle
  $-0.01\le \hat\eta_{1,0},\hat\eta_{0,1}\le 0.01$.}
\end{figure}

The contour plots of panels (e) and (f) of Figure~\ref{tau_c_plot}
were made by solving (\ref{eq:trav:conf}) with
$(\hat\eta_{1,0},\hat\eta_{0,1})$ ranging over a uniform $26\times26$
grid on the square $[-0.01,0.01]\times[-0.01,0.01]$. 
Using an even number of gridpoints avoids the degenerate case
where $\hat\eta_{1,0}$
or $\hat\eta_{0,1}$ is zero. At those values, the two-dimensional
family of quasi-periodic solutions meets a sheet of periodic solutions
where $\tau$ or $c$ becomes a free parameter. 
Alternative techniques
would be needed in these degenerate cases to determine the value of
$\tau$ or $c$ from which a periodic traveling wave in the nonlinear
regime bifurcates to a quasi-periodic wave. 
In panel (g), we plot the
magnitude of the Chebyshev coefficients in the expansion
\begin{equation}\label{eq:c:cheb:expand}
  c(\hat\eta_{1,0},\hat\eta_{0,1}) = \sum_{m=0}^{15}\sum_{n=0}^{15}
  \hat c_{mn}T_m(100\hat\eta_{1,0})T_n(100\hat\eta_{0,1}), \qquad
  -0.01\le \hat\eta_{1,0},\hat\eta_{0,1}\le 0.01.
\end{equation}
This was done by evaluating $c$ on a cartesian product of two 16-point
Chebyshev-Lobatto grids over $[-0.01,0.01]$ and using the
one-dimensional Fast Fourier Transform in each direction to compute
the Chebyshev modes.  We see that the modes decay to machine precision
by the time $m+n\ge10$ or so, and only even modes $m$ and $n$ are
active.  The plot for $|\hat\tau_{mn}|$ is very similar, so we omit
it.  These plots confirm the visual observation from the contour plots
that $\tau$ and $c$ are even functions of $\hat\eta_{1,0}$ and
$\hat\eta_{0,1}$ when the other is held constant. These properties
of $\tau$ and $c$ make them unsuitable as continuation parameters
near the trivial solution, as discussed in Section~\ref{sec:num}.

In panels (c) and (d) of Figure~\ref{tau_c_plot}, we show the
  energy $E$ and momentum $P_x$ of waves in the above two-parameter
  family of quasi-periodic solutions,
\begin{equation}\label{eq:E:Px:def}
\begin{aligned}
  E = &\int_{\mathbb{T}^2} \frac{1}{2} \tilde{\psi}(\partial_{\alpha_1}
    + k\partial_{\alpha_2})\tilde{\varphi}
  + \frac{1}{2} g\tilde{\eta}^2\big(1+ (\partial_{\alpha_1}
      + k\partial_{\alpha_2})\tilde{\xi}\big) \\
  & \quad + \tau\left(\sqrt{\big(1+ (\partial_{\alpha_1}
          + k\partial_{\alpha_2})\tilde{\xi}\big)^2 + 
      \big((\partial_{\alpha_1} + k\partial_{\alpha_2})\tilde{\eta}\big)^2}
    - 1\right)d\alpha_1d\alpha_2, \\
  P_x = &-\int_{\mathbb{T}^2} \tilde{\varphi}(\partial_{\alpha_1}
    + k\partial_{\alpha_2})\tilde{\eta}\,
  d\alpha_1 d\alpha_2.
\end{aligned}
\end{equation}
These formulas are derived in \cite{zakharov2002new,dyachenko2019} in
the conformal mapping framework for a water wave of infinite depth.
The only modification needed for spatially quasi-periodic waves with
$d$ quasi-periods is that integrals over $\mbb R$ or $\mbb T$ are
replaced by integrals over~$\mbb T^d$. In \cite{quasi:ivp}, it is
confirmed that $E$ and $P_x$ in (\ref{eq:E:Px:def}) are conserved quantities
under the evolution equations (\ref{general_conformal}).
We see in Figure~\ref{tau_c_plot} that the energy and momentum of the
quasi-periodic waves are positively correlated. In particular, the
quasi-periodic wave family with $\gamma = 1$ possesses the largest
energy and momentum when $\hat{\eta}_{\max}$ is fixed, even though it
does not have the highest wave speed. Energy and momentum can both be
regarded as measures of the amplitude of the wave. Unlike the wave
speed, they are both zero at the flat rest state. We note that
$\gamma=1$ corresponds to maximizing both $|\eta_{1,0}|$ and
$|\eta_{0,1}|$ to have the value $\hat{\eta}_{\max}$, and also leads to
the largest amplitude oscillations in Figure~\ref{initial_plot}. The
Hamiltonian structure of the equations of motion could be useful e.g.~in
generalizing the time quasi-periodic results of
Berti~et.~al.~\cite{berti2020traveling}
to the spatially quasi-periodic setting.

\subsection{Time evolution of spatially quasi-periodic traveling waves}
\label{sec:time:evol}

\begin{figure}
\includegraphics[width=.9\textwidth]{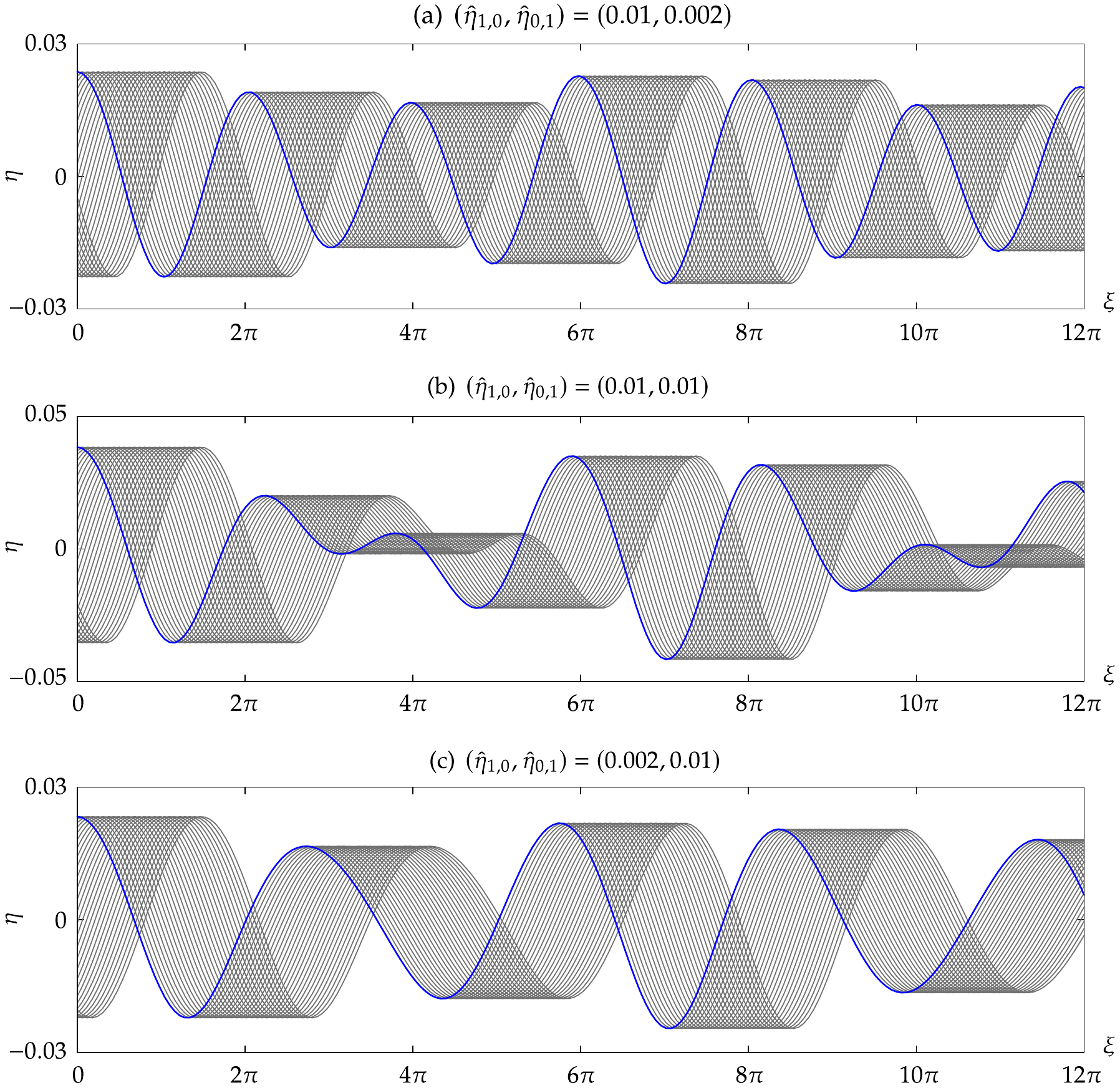}
  \caption{\label{traveling_plot} Time evolution of the traveling wave
    profiles, $\zeta(\alpha,t)$, from $t=0$ to $t=3$ in the lab
    frame. The thick blue lines correspond to the initial conditions.}
\end{figure}

In this section, we confirm that the quasi-periodic solutions
we obtain by minimizing the objective function
(\ref{numerical_objective}) are indeed traveling waves under the
evolution equations (\ref{general_conformal}). This allows us to
  measure the accuracy of our independent codes for solving these two
  problems by comparing the numerical results. An interesting feature
  of the conformal mapping formulation arises in this comparison,
  namely that for most choices of $C_1$ in (\ref{general_conformal}),
  traveling waves move at a non-uniform speed through conformal space
  in order to travel at constant speed in physical space. This is
  discussed in this section and proved in Appendix~\ref{sec:dyn:trav}.

In Figure~\ref{traveling_plot}, we plot the time evolution of
$\zeta(\alpha,t)$ in the lab frame from $t=0$ to $t=3$.
The initial conditions, plotted with thick blue lines, are those of
the traveling waves computed in Figures~\ref{initial_plot}
and~\ref{contour_plot} above by minimizing the objective
  function (\ref{numerical_objective}). The grey curves give
snapshots of the solution at uniformly sampled times with $\Delta
t=0.1$.  They were computed using the 5th order explicit
  Runge-Kutta method described in \cite{quasi:ivp} with a stepsize
  of 1/300, so there are 30 Runge-Kutta steps
between snapshots in the figure. The solutions are plotted over the
representative interval $0\le x\le 12\pi$, though they extend in both
directions to $\pm\infty$ without exactly repeating. The
  initial condition and time evolution were computed on the torus
  and then sampled along the $(1,k)$ direction to extract the data
  for these 1D plots.

For quantitative comparison, let $\tilde\eta_0(\bds\alpha)$
  denote the initial condition on the torus, which is computed
  numerically by minimizing (\ref{numerical_objective}).  We then
compute $\tilde\xi_0=\Hilbert[\tilde\eta_0]$ and
$\tilde\varphi_0=c\tilde\xi_0$, which are odd functions of
$\bds\alpha=(\alpha_1,\alpha_2)\in\mbb T^2$ since $\tilde\eta$ is even. From
Corollary~\ref{cor:trav:conf} of Appendix~\ref{sec:dyn:trav}, we
define the ``exact solution'' of the time evolution of the traveling
wave under (\ref{general_conformal}) and (\ref{eq:C1:opt2}) with these
initial conditions as
\begin{equation}\label{eq:exact:a0}
  \begin{aligned}
    \tilde\eta_\text{exact}(\bds\alpha,t) &=
    \tilde\eta_0\big(\bds\alpha - \bds k \alpha_0(t)\big), \\
    \tilde\varphi_\text{exact}(\bds\alpha,t) &=
    \tilde\varphi_0\big(\bds\alpha - \bds k\alpha_0(t)\big),
  \end{aligned}
\end{equation}
where $\bds k=(1,k)$, $\alpha_0(t)=ct -
\mc A_0(-\bds kct)$ and $\mc A_0(x_1,x_2)$ is a
periodic function on $\mbb T^2$ defined implicitly by
(\ref{eq:mcA:trav}) below. 
We see in (\ref{eq:exact:a0}) that the
waves do not change shape as they move through the torus along the
characteristic direction $\bds k$, but the traveling speed
$\alpha_0'(t)$ in conformal space varies in time in order to maintain
$\tilde\xi(0,0,t)=0$ via (\ref{eq:C1:opt2}). By
Corollary~\ref{cor:trav:conf}, the exact reconstruction of
$\tilde\xi_\text{exact}$ from $\tilde\eta_\text{exact}$ is
\begin{equation}\label{eq:exact:xi}
  \tilde\xi_\text{exact}(\bds\alpha,t) =
    \tilde\xi_0\big(\bds\alpha - \bds k\alpha_0(t)\big) + \delta_0(t),
\end{equation}
where $\delta_0(t) = ct-\alpha_0(t) = \mc A_0(-\bds kct)$ measures the
deviation in position from traveling at the constant speed $ct$ in
conformal space. The defining property (\ref{eq:mcA:trav}) of
$\mc A_0(x_1,x_2)$ ensures that $ \tilde\xi_\text{exact}(0,0,t)=0$.

The significance of $\mc A_0$ is that the inverse of the mapping
  $\bds x = \bds \alpha + \bds k\tilde\xi_0(\bds\alpha)$ on $\mbb
  T^2$, assuming it is single-valued, is
  \begin{equation}\label{eq:mcA:meaning}
  \bds \alpha = \bds x + \bds k\mc A_0(\bds x).
  \end{equation}
  As shown in \cite{quasi:ivp}, this allows us to
  express quasi-periodic solutions of the initial value problem in
  conformal space as quasi-periodic functions in physical space. In
  the traveling case considered here, the exact solutions on the torus
  in physical space are $\tilde\eta_0^\text{phys}(\bds x-\bds kct)$
  and $\tilde\varphi_0^\text{phys}(\bds x-\bds kct)$, where
  e.g.~$\tilde\eta_0^\text{phys}(\bds x) = \tilde\eta_0(\bds x +\bds k
    \mc A_0(\bds x))$. We know this already on physical grounds, but
  it also follows from (\ref{eq:exact:a0}) and (\ref{eq:exact:xi})
  using
  \begin{equation*}
    \tilde\eta^\text{phys}_\text{exact}(\bds x,t) =
    \tilde\eta_\text{exact}(\bds x + \bds k\mc A(\bds x,t)\,,\,t),
    \qquad
    \tilde\varphi^\ph_\text{exact}(\bds x,t) =
    \tilde\varphi_\text{exact}(\bds x + \bds k\mc A(\bds x,t)\,,\,t),
  \end{equation*}
  where $\mc A(\bds x,t) = \mc A_0(\bds x-\bds kct) - \mc A_0(-\bds kct)$
satisfies the time-dependent analog of (\ref{eq:mcA:trav}).

\begin{figure}
\includegraphics[width=\textwidth]{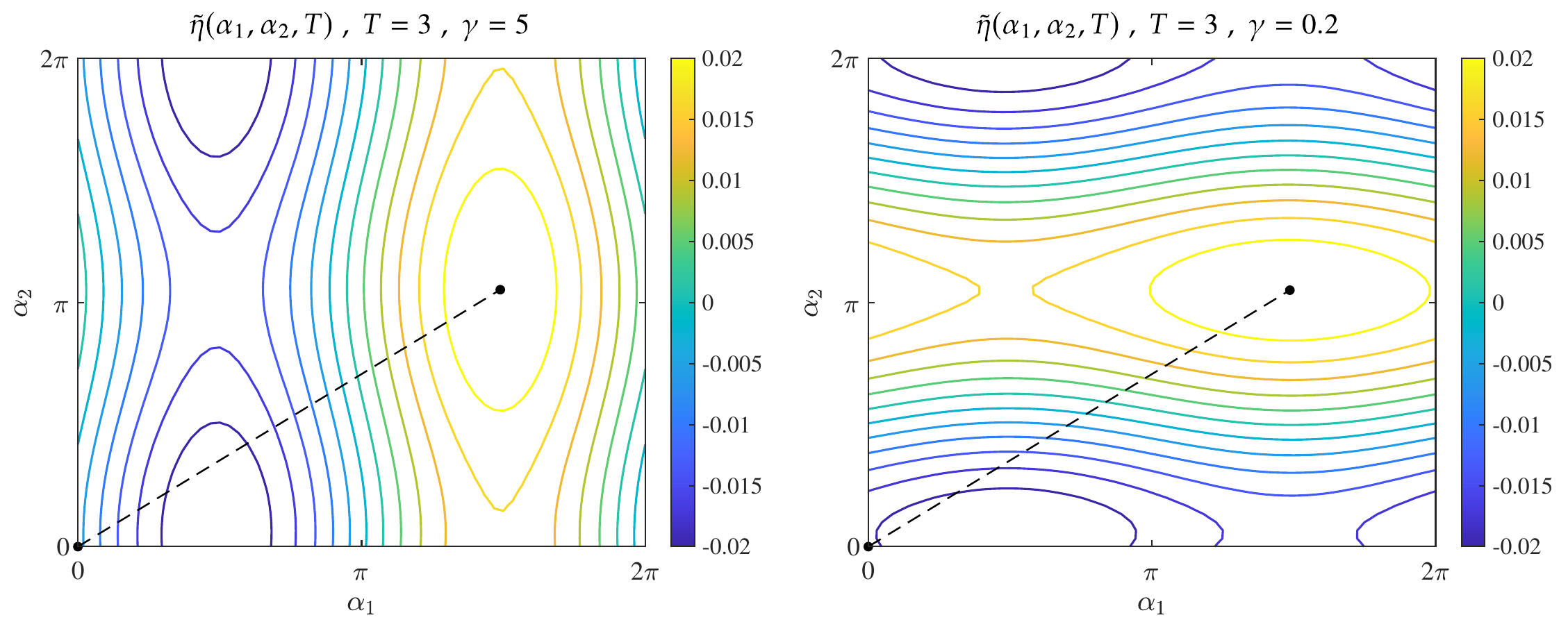}
\caption{\label{stepper_plot} Contour plots of the numerical solution
  $\tilde\eta(\alpha_1,\alpha_2,T)$ on the torus corresponding to the
  quasi-periodic solutions $\eta(\alpha,t)$ of panels (a) and (c) of
  Figure~\ref{traveling_plot} at the final time shown, $t=T=3$. The
  dashed lines show the trajectory of the wave crest from $t=0$ to
  $t=T$.}
\end{figure}

Figure~\ref{stepper_plot} shows contour plots of the torus version of
the $\gamma=5$ and $\gamma=0.2$ solutions shown in panels (a) and (c)
of Figure~\ref{traveling_plot} at the final time computed, $T=3$.  A
similar plot of the $\gamma=1$ solution is given in \cite{quasi:ivp}.
The dashed lines show the trajectory from $t=0$ to $t=T$ of the wave
crest that begins at $(0,0)$ and continues along the path
$\alpha_1=\alpha_0(t),$ $\alpha_2=k\alpha_0(t)$ through the torus in
(\ref{eq:exact:a0}). The following table gives the phase speed, $c$,
surface tension,~$\tau$, translational shift in conformal space at the
final time computed, $\alpha_0(T)$, and deviation from steady motion
in conformal space, $\delta_0(T)$, for these three finite-amplitude
solutions (recall that $\max\{\hat\eta_{1,0},\hat\eta_{0,1}\}=0.01$
  and $\hat\eta_{1,0}/\hat\eta_{0,1}=\gamma$) as well as for the
zero-amplitude limit:
\begin{equation*}
  \begin{array}{c|c|c|c|c}
    & \gamma=5 & \gamma=1 & \gamma = 0.2 & \text{linear theory} \\ \hline
    c & \phm 1.552\,175 & \phm 1.552\,197 & \phm 1.553\,743 &
    c_\text{lin} = 1.553\, 774 \\
    \tau & \phm 1.409\,665 & \phm 1.410\,902 & \phm 1.415\,342 &
    \tau_\text{lin} = 1.414\,214 \\
    \alpha_0(T) & \phm 4.677\,416 & \phm 4.681\,174 & \phm 4.668\,757 &
    c_\text{lin}T = 4.661\,322 \\
    \delta_0(T) & -0.020\,890 & -0.024\,583 & -0.007\,527 & 0
  \end{array} \qquad\quad
  \begin{array}{c} \\ \\ \\ (T=3) \end{array}
\end{equation*}
In Figure~\ref{delta_plot}, we plot $\delta_0(t)$ for $0\le t\le T$
(solid lines) along with $(c-c_\text{lin})t$ (dashed and dotted lines)
for the three finite-amplitude solutions in this table. Writing
$\alpha_0(t) = c_\text{lin}t + [(c-c_\text{lin})t - \delta_0(t)]$, we
see that the deviation of $\alpha_0(t)$ from linear theory over this
time interval is due mostly to fluctuations in $\delta_0(t)$ rather
than the steady drift $(c-c_\text{lin})t$ due to the change in phase
speed $c$ of the finite-amplitude wave.

\begin{figure}
\includegraphics[width=.75\textwidth]{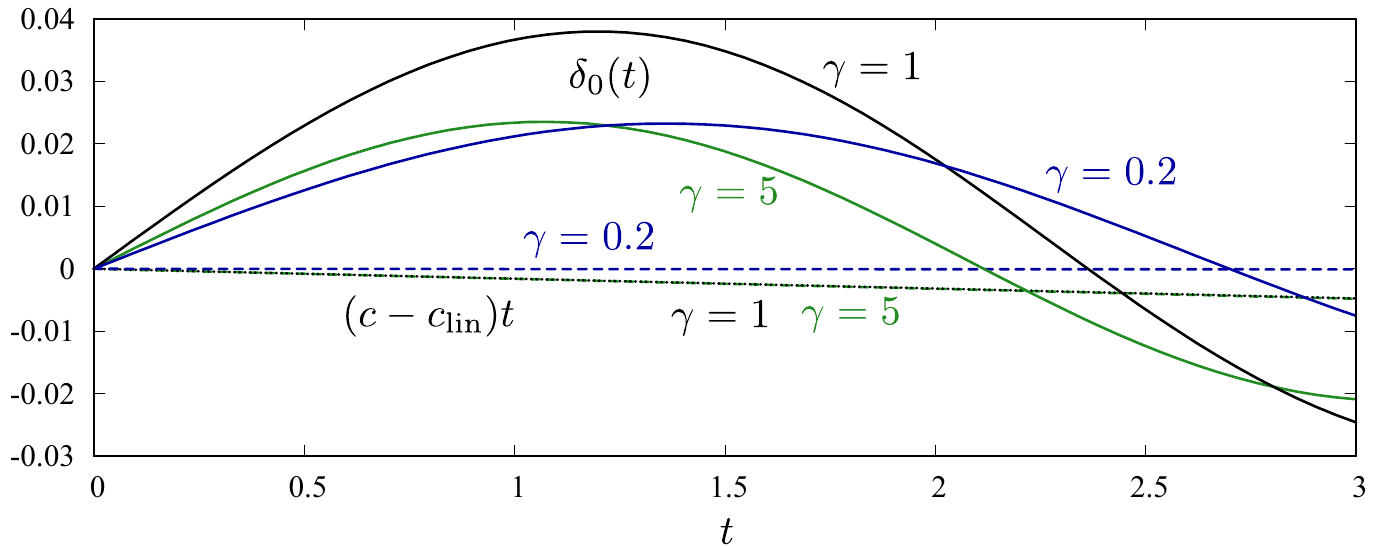}
\caption{\label{delta_plot} Plots of $\delta_0(t)=ct-\alpha_0(t)$ in
  (\ref{eq:exact:a0}) and $(c-c_\text{lin})t$ for the solutions
  of Figure~\ref{traveling_plot}.  }
\end{figure}

Computing the exact solution (\ref{eq:exact:a0}) requires evaluating
$\delta_0(t) = \mc A_0(-ct,-kct)$. We use Newton's method to solve the
implicit equation (\ref{eq:mcA:trav}) for $\mc A_0(x_1,x_2)$ at each
point of a uniform $M\times M$ grid, with $M_1=M_2=M$ in the notation of
Section~\ref{sec:num}. We then use FFTW to compute the 2d Fourier
representation of $\mc A_0(x_1,x_2)$, which is used to quickly evaluate
the function at any point. It would also have been easy to compute
$\mc A_0(-ct,-kct)$ directly by Newton's method, but the Fourier
approach is also very fast and gives more information about the
function $\mc A_0(x_1,x_2)$. In particular, the modes decay to machine
roundoff on the grid, corroborating the assertion in \cite{quasi:ivp}
that $\mc A_0$ is real analytic. We use the exact solution to compute
the error in timestepping (\ref{general_conformal}) and
(\ref{eq:C1:opt2}) from $t=0$ to $t=T$,
\begin{equation*}
  \text{err} = \sqrt{\|\tilde\eta-\tilde\eta_{\text{exact}}\|^2 +
    \|\tilde\varphi-\tilde\varphi_{\text{exact}}\|^2}, \quad
  \|\tilde\eta\|^2 = \frac1{M_1M_2}\sum_{m_1,m_2}
      \tilde\eta\left(\frac{2\pi m_1}{M_1},\frac{2\pi m_2}{M_2},T\right)^2.
\end{equation*}
A detailed convergence study is given in \cite{quasi:ivp} to compare the
accuracy and efficiency of the Runge-Kutta and exponential time
differencing schemes proposed in that paper using the $\gamma=1$
traveling solution above as a test case.  Here we report the errors
for all three waves plotted in Figure~\ref{traveling_plot}
\begin{equation*}
  \begin{array}{c|c|c|c}
    & \gamma=5 & \gamma=1 & \gamma=0.2 \\ \hline
    \text{err} & 1.04\times10^{-16} & 1.16\times10^{-16} & 7.38\times10^{-17}
  \end{array}
\end{equation*}
using the simplest timestepping method proposed in \cite{quasi:ivp} to
solve (\ref{general_conformal}), namely a 5th order explicit
Runge-Kutta method using 900 uniform steps from $t=0$ to $t=3$.  These
errors appear to mostly be due to roundoff error in floating-point
arithmetic, validating the accuracy of both the timestepping algorithm
of \cite{quasi:ivp} and the traveling wave solver of
Section~\ref{sec:num}, which was taken as the exact solution.
Evolving the solutions to compute these errors took less than a second
on a laptop (with $M^2=3600$ gridpoints and 900 timesteps), while
computing the traveling waves via the Levenberg-Marquardt method took
7 seconds on a laptop and only 0.9 seconds on a server (Intel Xeon
  Gold 6136, 3GHz) running on 12 threads (with $M^2=3600$ gridpoints
  and $N_\text{tot}=1200$ unknowns).

\subsection{Larger-amplitude gravity-capillary waves}\label{sec:large:grav:cap}


In the previous sections we studied the full two-parameter family of
quasi-periodic traveling waves with $k=1/\sqrt2$, varying both
$\hat\eta_{1,0}$ and $\hat\eta_{0,1}$ over the range
$[-0.01,0.01]$. Here we search for larger-amplitude waves along the
path $\gamma=1$, where
$\hat\eta_{1,0}=\hat\eta_{0,1}=\hat\eta_\text{max}$ serves as an
amplitude parameter. The calculations are done on an $M\times M$ grid
with Fourier cutoff $N$. As the amplitude increases with $M$ and $N$
fixed, the Fourier modes outside of the cutoff region
$\max(|j_1|,|j_2|)\le N$ eventually grow in magnitude
to exceed $\veps\hat\eta_\text{max}$,
where $\veps$ is machine precision.
Because we formulate the problem as an overdetermined least-squares
problem, it ceases to be possible to satisfy all the equations with
the limited number of Fourier degrees of freedom, and the minimum
value of the objective function begins to grow rapidly with
amplitude.

This is demonstrated in Figure~\ref{fig:errMN} using five
grids ranging from $(M,N)=(48,11)$ to $(M,N)=(240,100)$ and
$\hat\eta_\text{max}$ ranging from $0.001$ to $0.29$. Because the
objective function $f$ is a squared error, if the solution has 14
digits of accuracy the objective function will be around
$10^{-28}$. The coarsest grid becomes under-resolved for
$\hat\eta_\text{max}>0.013$ while the finest grid becomes
under-resolved for $\hat\eta_\text{max}\ge0.0281$.  One can see from
the 2D Fourier plots in Figure~\ref{fourier_plot} that $N=24$ was
overkill at the amplitude $\hat\eta_\text{max}=0.01$ since the modes
have decayed below $\veps\hat\eta_\text{max}=1.11\times10^{-18}$ by
the time $\max(|j_1|,|j_2|)\ge11$. But we see in Figure~\ref{fig:errMN}
that once $\hat\eta_\text{max}$
reaches 0.021, it becomes necessary to increase $M$ and $N$ to
maintain accuracy. At this amplitude, a 2D Fourier plot (not shown)
contains larger-amplitude modes extending all the way to the boundary
of $\max(|j_1|,|j_2|)\le24$.

The running time grows rapidly with grid size, with each calculation
on the grids in Figure~\ref{fig:errMN} requiring an average of
\begin{equation*}
\begin{array}{r|c|c|c|c|c}
  (M,N) & (48,11) & (72,24) & (96,40) & (192,75) & (240,100) \\ \hline
  \text{running time} & 0.1 \text{ sec} & 0.9
  \text{ sec} & 10.2 \text{ sec} & 8.5 \text{ min} & 55.3 \text{ min}
\end{array}
\end{equation*}
on a 3 GHz server with 24 cores.  The memory requirements also grow
rapidly as several matrices of size $M^2\times N_\text{tot}$ are
computed in the Levenberg-Marquardt algorithm, namely the Jacobian and
its (reduced) singular value decomposition. In the $(M,N)=(240,100)$
case, each of these matrices requires 9.3 GB of storage, and we are
not able to increase the problem size further due to hardware
limitations.
As a possible future research direction, one can try to improve the
performance of the Levenberg-Marquardt method for high-dimensional
problems by using a Krylov subspace approximation
\cite{lin2016computationally} without computing the entire Jacobian or
its SVD.

\begin{figure}
\includegraphics[width=.8\textwidth]{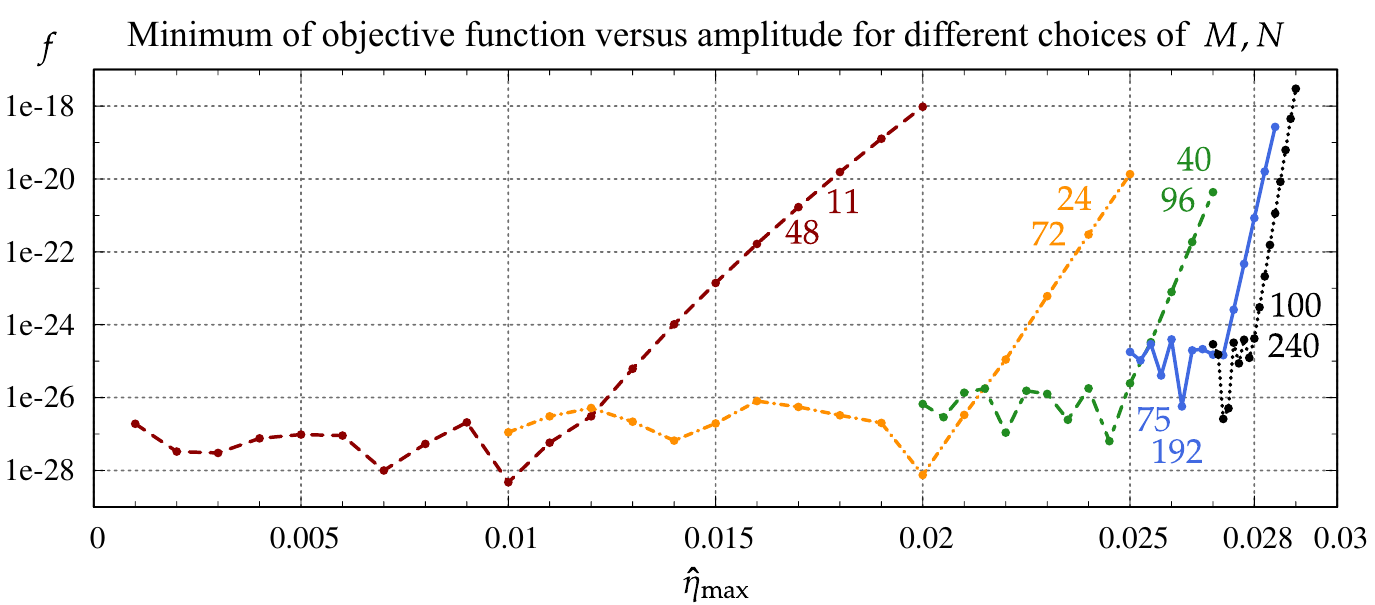}
\caption{\label{fig:errMN}
  Minimum value of the objective function
    $f$ for different values of $M$, $N$ and amplitude,
    $\hat\eta_\text{max}$. Each
    curve is labeled by two numbers, $M$ and $N$, with $N$ the smaller
    one. The objective function grows rapidly with
    $\hat\eta_\text{max}$ once there are not enough Fourier modes to
    represent the solution to machine precision.}
\end{figure}

In Figure~\ref{higher_amplitude}(a), we plot the largest-amplitude
``fully resolved'' solution in Figure~\ref{fig:errMN} with
$(M,N)=(240,100)$ and $(\hat{\eta}_{1, 0}, \hat{\eta}_{0, 1}) =
(0.028, 0.028)$. The solid black curve is the nonlinear traveling
wave, which has a maximum slope of $0.107$ over the representative
interval $[0,10\pi]$ shown in the plot, while the dashed red curve is
the linear prediction $\eta(\alpha) = 0.056\cos(\alpha) +
0.056\cos(\alpha/\sqrt{2})$.  At this amplitude, there is a visible
difference between the nonlinear and linear quasi-periodic waves,
especially near the peaks and troughs of the waves.  However, the
difference is not large since the two base modes are still the
dominant Fourier modes: the amplitudes of the other Fourier modes are
less than 0.0035, which is $1/8$ of the base modes.

One of the main obstacles to computing high-amplitude solutions
numerically is the slow decay of Fourier modes along certain resonant
directions. To demonstrate this, we plot in
Figure~\ref{higher_amplitude}(b) the amplitudes of the Fourier modes
of $\tilde{\eta}$ along 7 directions: $j_1+akj_2\approx 0$ with
$a\in\{1,1.1,0.9,1.5,0.5,0\}$ and $j_2 = 0$. Since $k$ is
irrational, in direction $j_1 + akj_2 \approx 0$ we choose $j_1$ to be
$\min\{\text{floor}(-akj_2), N\}$ with $j_2\in\{-1,\dots,-N\}$.
As shown in the figure, the Fourier modes decay more slowly when the
ratio $-j_1/j_2$ is close to $k$.  Even though $j_1 + k j_2 = 0$ is
the resonant condition for linear quasi-periodic waves, the effects of
this resonance persist into the nonlinear regime.  Along the direction
$j_1+kj_2\approx0$, the mode with $(j_1,j_2)$ farthest from the
origin is $\hat{\eta}_{70, -100}$ and its amplitude is $8.3\times
10^{-12}$, which is the point where floating-point error and finite $N$
truncation effects are roughly equal in this large-scale optimization
problem.

In non-resonant directions, the modes decay faster, often
remaining smaller than $\veps\hat\eta_\text{max}$. For example, curves
(4)--(7) in Figure~\ref{higher_amplitude}(b) drop below $10^{-20}$ for
$(j_1^2+j_2^2)\ge50$, whereas
$\veps\hat\eta_\text{max}=(2^{-53})(0.028)=3.1\times 10^{-18}$. This
may also be observed in the 2D Fourier plots of
Figure~\ref{fourier_plot}.  Presumably the columns $\pa\mc R/\pa p_j$
of the Jacobian corresponding to these modes remain nearly orthogonal
to the residual $\mc R$ throughout the computation, so the
Levenberg-Marquardt algorithm brings them into the calculation with
very small coefficients.  Since increasing the amplitude beyond
$\hat\eta_\text{max}=0.028$ leads to loss of spectral accuracy, this
is the largest amplitude wave of this type that we can compute.

\begin{figure}
\includegraphics[width=\textwidth]{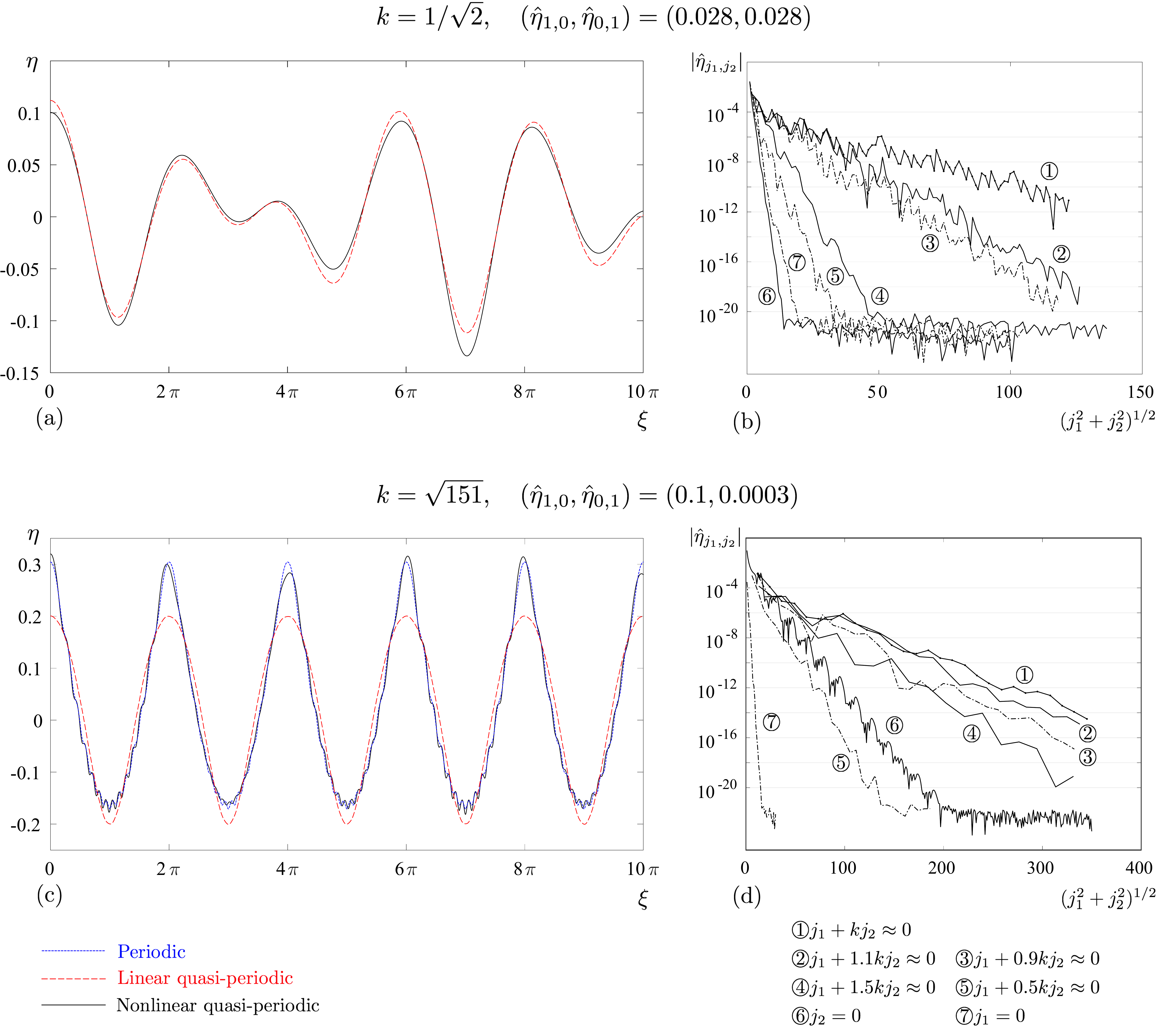}
\caption{\label{higher_amplitude} Plots of higher-amplitude
    quasi-periodic traveling waves.  Panels (a) and (c) show the
    initial conditions $\eta$ over $[0,10\pi]$.  Panels (b) and (d)
    show the amplitudes of Fourier modes along different directions
    versus the magnitude of the two-dimensional mode index
    $(j_1,j_2)$.}
\end{figure}

Next we look for steeper waves by modifying the surface tension
parameter $\tau$ and wave number ratio $k$. So far we have only shown
calculations with $k = 1/\sqrt{2}$, which was an arbitrary choice.
In ocean waves, the characteristic wavelength of gravity waves is
larger than that of capillary waves by several orders of magnitude.
Here we increase $k$ modestly to $\sqrt{151} \approx 12.29$, which is
still much smaller than occurs in the ocean but could be relevant to a
laboratory experiment.  The case of pure gravity waves, which is more
relevant to the ocean, will be undertaken in future
work. Some comments on this were given in the introduction.

Rather than explore the two-parameter family of quasi-periodic water
waves with $k=\sqrt{151}$ near the trivial solution or follow a path
holding $\gamma=\hat\eta_{1,0}/\hat\eta_{0,1}$ constant, we attempt to
compute steep quasi-periodic traveling waves as small quasi-periodic
perturbations of large-amplitude periodic waves, which are
comparatively inexpensive to compute
\cite{dyachenko2016branch,trichtchenko:16}.  In panel (e) of
Figure~\ref{tau_c_plot} above, the contour plot of
$\tau(\hat\eta_{1,0},\hat\eta_{0,1})$ represents a surface in
three-dimensional $(\hat\eta_{1,0},\hat\eta_{0,1},\tau)$ space. The
coordinate planes $\hat\eta_{0,1}=0$ and $\hat\eta_{1,0}=0$ in this
space are two additional surfaces representing traveling waves, the
first of periodic waves of wavelength $2\pi$ and the second of
periodic waves of wavelength $2\pi/k$.  The two parameters on the
$\hat\eta_{0,1}=0$ surface are $\tau$ and $\hat\eta_{1,0}$. This
surface intersects the $\tau(\hat\eta_{1,0},\hat\eta_{0,1})$ surface
along a curve $\tau(\hat\eta_{1,0},0)$ where it is possible to
bifurcate from periodic traveling waves to quasi-periodic traveling
waves. As explained after (\ref{eq:c:cheb:expand}) above,
$\tau(\hat\eta_{1,0},0)$ is an even function of $\hat\eta_{1,0}$, so
its deviation from $\tau_\text{lin}$ is a second-order correction.

In the present case of $k=\sqrt{151}$, we hold the surface tension
fixed at $\tau=\tau_\text{lin}=1/\sqrt{151} \approx 0.0814$ and use
the Levenberg-Marquardt method
\cite{wilkening2012overdetermined,trichtchenko:16} to compute the
resulting one-parameter family of $2\pi$-periodic traveling waves,
denoted as $\eta_\text{per}$, over the range $0\le \hat\eta_{1,0} \le
0.1$.  At the amplitude $\hat\eta_{1,0}=0.1$, the Fourier modes
$\hat\eta_{j_1}$ of $\eta_\text{per}(\alpha)$ decay to machine
precision around $j_1=200$. The maximum slope of this wave in physical
space is 0.332, which is about 3 times steeper than the
quasi-periodic wave computed above with
$k=1/\sqrt2$ and $\hat\eta_{1,0}=\hat\eta_{0,1}=0.028$. Rather than
search within the family of periodic waves for the bifurcation point
$\tau(0.1,0)$, we attempt to jump directly
onto the family of quasi-periodic waves from the periodic wave with
$\tau=\tau_\text{lin}$.  As an initial guess for the
Levenberg-Marquardt method, we set
\begin{equation*}
  \tilde\eta^{(0)}(\alpha_1,\alpha_2) =
  \eta_\text{per}(\alpha_1) + \hat\eta_{0,1}\big(
    e^{i\alpha_2} + e^{-i\alpha_2}\big), \qquad
  \tau^{(0)}=\tau_\text{lin}, \qquad
  b^{(0)} = c_\text{per}^2
\end{equation*}
in (\ref{compute_linear_solution}), where $c_\text{per}$ is the wave
speed of $\eta_\text{per}$. We succeeded in minimizing the objective
function to $f=8.4\times10^{-29}$ holding
$(\hat\eta_{1,0},\hat\eta_{0,1})$ fixed at $(0.1,0.00003)$ and using
$(N_1,N_2)=(216,8)$ for the Fourier cutoffs on an $M_1\times
M_2=576\times24$ grid. Using smaller values $N_2<N_1$ and $M_2<M_1$ is
possible since the unperturbed wave is independent of $\alpha_2$, and
is required to make the problem computationally tractable.  We then use numerical
continuation to increase $\hat\eta_{0,1}$ to $0.0003$ in increments of
$0.00003$, holding $\hat\eta_{1,0}=0.1$ fixed. Polynomial
interpolation of $\tau(0.1,\hat\eta_{0,1})$ from the points
$\hat\eta_{0,1}\in \{\pm 0.00003m\,:\, 1\le m\le 5\}$ gives the value
$\tau(0.1,0)=0.0807311$ for the surface tension of the periodic
traveling wave where the bifurcation to quasi-periodicity occurs. This
is only $0.8\%$ smaller than $\tau_\text{lin}$, which explains why it
was possible to find nearby quasi-periodic waves to the
$\tau=\tau_\text{lin}$ periodic waves even though this is not
the precise location of the bifurcation.

The last solution on this path, with
($\hat\eta_{1,0},\hat\eta_{0,1})=(0.1,0.0003)$, is shown in panels (c)
and (d) of Figure \ref{higher_amplitude}. We had to increase the
Fourier cutoffs $(N_1,N_2)$ to $(350,30)$ and the grid to $M_1\times
M_2 = 720\times 64$ to achieve spectral accuracy. The objective
function for this quasi-periodic solution has been minimized to
$f=1.8\times10^{-25}$ and the maximum slope over the representative
interval $[0,10\pi]$ is $0.448$, so this wave is 35\% steeper than
$\eta_\text{per}$ and 4.2 times steeper than the $k=1/\sqrt2$ wave of
panels (a) and (b) of the figure.  Hardware limitations prevented
increasing $\hat\eta_{0,1}$ further since there are already
$M_1M_2=46080$ nonlinear equations in $N_\text{tot}=21380$ unknowns.
The wave speed and surface tension of this quasi-periodic wave are
$\tau=0.0809677$ and $c=1.072419$, which are close to the values
$\tau(0.1,0)=0.0807311$ and $c(0.1,0)=1.071972$ of the periodic wave
at the bifurcation.

Panel (c) shows the nonlinear periodic and quasi-periodic traveling
waves as well as the linear quasi-periodic traveling wave $\eta = 0.2
\cos(\alpha) + 0.0006 \cos(k\alpha)$ over the representative interval
$[0,10\pi]$. Both nonlinear waves deviate from the linear wave by more
than 50\% of the amplitude of the linear wave, which shows that these
solutions are well outside of the linear regime.  The difference
between the periodic wave and the quasi-periodic wave is also visible,
with the wave peak at $\xi=0$ perturbed upward and the others
perturbed upward or downward and left or right, asymmetrically, in a
non-repeating pattern. The small oscillations in the trough also
change aperiodically from one trough to the next, which shows that
some of the modes $\hat\eta_{j_1,j_2}$ with $j_2\ne0$ are comparable
in size to the modes of the periodic wave responsible for the
capillary ripples in the troughs.

Panel (d) shows the Fourier mode amplitudes $\hat\eta_{j_1,j_2}$ along
various directions in the $(j_1,j_2)$ lattice. Along the direction
$j_1 + akj_2 \approx 0$ we choose $j_1$ to be
$\min\{\text{floor}(-akj_2), N_1\}$ with $j_2\in\{-1,\dots,-N_2\}$.
One can see that the Fourier modes decay more slowly along directions
$j_1 + akj_2 \approx 0$ when $a\in\{1,1.1,0.9,1.5\}$ than when
$a\in\{0.5,0\}$ or when $j_2\approx0$.  Thus, the linear resonance
condition $j_1 + k j_2 = 0$ continues to have a large effect on the
Fourier modes in the nonlinear regime. As noted in
Section~\ref{sec:trav_rslts} above, we believe this is because the
corresponding modes $e^{i(j_1+j_2k)\alpha}$ in the expansion of
$\eta(\alpha)$ in (\ref{eq:eta:tilde}) have long wavelengths and are
not as strongly controlled by the governing equations
(\ref{eq:trav:conf}) as other modes, which leads to greater
sensitivity to nonlinear interactions among the Fourier modes.

\color{black}


\section{Conclusion} 
\label{sec:conclude}


In this work, we have formulated the two-dimensional, infinite depth
gravity-capillary traveling wave problem in a spatially
quasi-periodic, conformal mapping framework. We have numerically
demonstrated the existence of traveling solutions that are a
quasi-periodic generalizations of Wilton's ripples. To compute them,
we adapted an overdetermined nonlinear least squares technique
introduced in \cite{wilkening2012overdetermined} for a different
problem. For each solution computed, the value of $k$ and the
amplitudes of two base Fourier modes $\hat{\eta}_{1,0}$ and
$\hat{\eta}_{0,1}$ are fixed while $\tau$, $c$ and the other Fourier
modes $\hat\eta_{j_1,j_2}$ are varied to search for solutions of
(\ref{eq:trav:conf}). Before minimizing (\ref{numerical_objective}),
the initial guess for each solution is computed using either the
linear approximation (\ref{linear_solution}) or numerical
continuation.  We obtained quasi-periodic traveling solutions
  with maximum slope as large as 0.448 and validated the accuracy of
the traveling solutions using the timestepping algorithm of
\cite{quasi:ivp}. To evolve at constant speed in physical space, we
have shown that the 2d representation of the quasi-periodic waves
travel at a nonuniform speed through the torus. We explain this
  by constructing a change of variables, namely (\ref{eq:mcA:meaning}),
  relating quasi-periodic functions in conformal space to
  quasi-periodic functions in physical space with the same wave number
  ratio $k$.

As the amplitude increases, we have found that the wave spectrum
  of a quasi-periodic traveling wave continues to decay exponentially,
  but becomes much broader than in the linear and weakly nonlinear
  approximations. For example, the solution shown in panels (c) and
  (d) of Figure~\ref{higher_amplitude} has 23 modes within one percent
  of $\hat\eta_{1,0}$, 265 within a factor of $10^{-5}$, 1500 within
  an factor of $10^{-9}$, and 21380 involved in the calculation. We
  also demonstrated the nonlinear dependence of wave speed, surface
  tension, energy and momentum for the two-parameter family with
  amplitude parameters in the range $\max\{\big|\hat\eta_{1,0}\big| ,
  \big|\hat\eta_{0,1}\big|\}\le0.01$.  Resonance effects were always
  observed in the Fourier modes $\hat\eta_{j_1,j_2}$ near the line
  $j_1+j_2k=0$, which is the resonance condition for linear waves.  We
  provided the explanation that these modes are slowly varying when
  evaluated along the characteristic direction $(1,k)$ in the torus,
  and therefore are not strongly controlled by the Euler equations
  even for large-amplitude waves in the nonlinear regime. Additional
  resonance effects could be investigated in the future using
  Fourier-Bloch stability techniques \cite{longuet:78, oliveras:11,
    trichtchenko:16} generalized to the case of
  large-amplitude quasi-periodic traveling waves.

The question of what happens in our framework if $k$ is rational is
interesting.  We believe the initial value problem
(\ref{general_conformal}) could still be solved, though in that case
solving the torus version of the equations is equivalent to
simultaneously computing a family of 1d solutions on a periodic
domain. Families of 1d waves corresponding to a single solution of the
torus problem are discussed in detail in \cite{quasi:ivp}, and take
the form (\ref{eq:family:full}) above. If $k=q/p$ with $p$ and $q$
relatively prime integers, the waves in this family all have period
$2\pi p$.  The traveling wave problem becomes degenerate if $k$ is
rational --- solutions of the torus version of (\ref{eq:trav:conf})
may still exist (we do not know), but if so, they are not
unique. Indeed, if $k=q/p$ as above and $\tilde\eta_1$ solves the
torus version of (\ref{eq:trav:conf}), then for any $2\pi$-periodic,
real analytic function $\alpha_0(r)$,
\begin{equation}
  \tilde\eta_2\begin{pmatrix} \alpha_1\\ \alpha_2\end{pmatrix} =
  \tilde\eta_1\left(\begin{pmatrix} \alpha_1\\ \alpha_2\end{pmatrix} -
      \begin{pmatrix} p \\ q \end{pmatrix}\,\alpha_0
      \big( -q\alpha_1 + p\alpha_2 \big)\right)
\end{equation}
will also be a solution of (\ref{eq:trav:conf}) since the
corresponding 1d functions passing through the torus along
characteristic lines are related by a simple reparametrization,
\begin{equation}
  \eta_2(\alpha;\theta) = \tilde\eta_2\begin{pmatrix} \alpha \\ \theta + k\alpha
  \end{pmatrix} = \tilde\eta_1\begin{pmatrix} \alpha-p\alpha_0(p\theta) \\ 
  \theta + k\alpha - q\alpha_0(p\theta) \end{pmatrix} = 
  \eta_1\big(\alpha-p\alpha_0(p\theta);\theta\big).
\end{equation}
Another degeneracy is that the modes $\hat\eta_{j_1,j_2}$ of a
solution of (\ref{eq:trav:conf}) with $j_1+kj_2=0$ and
$(j_1,j_2)\ne(0,0)$ can be modified arbitrarily (maintaining
  $\hat\eta_{-j_1,-j_2}=\overline{\hat\eta_{j_1,j_2}}$) to obtain
additional solutions of (\ref{eq:trav:conf}). These modes are plane
waves that only affect the 1d functions passing through the torus
along characteristic lines by an additive constant. The resonance
phenomenon observed in the Fourier modes in Figure \ref{fourier_plot}
is presumably a small-divisor phenomenon \cite{rard2009small} in the
irrational case related to this degeneracy.  If solutions for rational
$k$ exist, a natural open question is whether they can be selected to
fit together continuously with solutions for nearby irrational wave
numbers.  In floating point arithmetic, irrational wave numbers are
approximated by rational ones. We did not encounter difficulties with
this, presumably because the above degeneracies are not visible with
the grid resolution used. More work is needed to understand this
rigorously.

The amplitude ratio $\gamma=\hat\eta_{1,0}/\hat\eta_{0,1}$ plays an
important role in determining the shapes of smaller-amplitude
solutions. As seen in Figures~\ref{initial_plot}
and~\ref{traveling_plot}, the quasi-periodic features of the solutions
are most evident when $\gamma \approx 1$.  For larger-amplitude
  waves such as the one plotted in Figure~\ref{higher_amplitude}(c),
  quasi-periodicity can lead to visible changes from one peak or
  trough to the next, without ever repeating. It is remarkable that
  such a complicated aperiodic wavetrain is a stationary solution of
  the Euler equations in a moving frame.

In the future, we hope to explore the long-time dynamics of
  unstable subharmonic perturbations of periodic waves; to search for
  quasi-periodic traveling waves that bifurcate from large-amplitude
  periodic gravity waves or from overhanging gravity-capillary waves;
  to study the behavior of different perturbation families,
e.g.~fixing the amplitudes of different base Fourier modes in
(\ref{resonance_fourier}) such as $\hat{\eta}_{1,0}$ and
$\hat{\eta}_{1,1}$; to study the stability of co-propagating
  quasi-periodic traveling waves and compare to the effects of oblique
  multi-phase interacting wave trains \cite{onorato2006modulational,
    ablowitz2015interacting}; to develop a generalization of
  Fourier-Bloch stability analysis for quasi-periodic waves, which
  presumably will further increase the number of quasi-periods of the
  perturbed wave; and to study finite-depth effects on both the
  initial value problem and the traveling wave problem. Additional
  future research challenges include establishing rigorous existence
  proofs; improving the algorithm to employ a Newton-Krylov or
  limited-memory approach so that it is not necessary to compute or
  factor the entire Jacobian matrix; and developing a formulation for
  quasi-periodic three-dimensional water waves, which would require
  abandoning the conformal mapping framework.

Declaration of Interests. The authors report no conflict of interest.

\appendix

\section{Dynamics of Traveling Waves in Conformal Space}
\label{sec:dyn:trav}

In this section we study the dynamics of the traveling waves of
Section~\ref{sec:gov:trav} under the evolution equations
(\ref{general_conformal}) for various choices of $C_1$. We show that
the waves maintain a permanent form but generally travel at a
non-uniform speed in conformal space.  We start by showing that there
is a choice of $C_1$ for which $\eta$ and $\varphi$ remain stationary
in time. We then show how $C_1$ changes when the waves are phase
shifted by $\alpha_0(t)$, and how to determine $\alpha_0(t)$ so that
$C_1$ takes the value in (\ref{eq:C1:opt2}).  The evolution of the
torus version of (\ref{eq:trav:conf}) under (\ref{general_conformal})
is also worked out.

We will need the following theorem and corollary, proved in
\cite{quasi:ivp}:

\begin{theorem}\label{thm:conformal}
  Suppose $\veps>0$ and $z(w)$ is analytic on the half-plane $\mbb
  C^-_\veps = \{w\;:\; \im w<\veps\}$. Suppose there is a constant
  $M>0$ such that $|z(w)-w|\le M$ for $w\in\mbb C^-_\veps$, and that the
  restriction $\zeta=z\vert_\mbb R$ is injective. Then the curve
  $\zeta(\alpha)$ separates the complex plane into two regions, and
  $z(w)$ is an analytic isomorphism of the lower half-plane onto the
  region below the curve $\zeta(\alpha)$.
\end{theorem}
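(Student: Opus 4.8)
The plan is to reduce the statement to the Jordan curve theorem together with an argument‑principle (degree) count for the holomorphic map $z$. First I would record the elementary geometric facts forced by $|z(w)-w|\le M$: on the real axis it gives $|\im\zeta(\alpha)|\le M$ and $\re\zeta(\alpha)\to\pm\infty$ as $\alpha\to\pm\infty$, so the curve $\Gamma:=\zeta(\mbb R)$ lies in the strip $\{|\im w|\le M\}$ and tends to $\infty$ at both ends. Hence the extension $\hat\zeta\colon\mbb R\cup\{\infty\}\to\hat{\mbb C}=\mbb C\cup\{\infty\}$ defined by $\hat\zeta(\infty)=\infty$ is a continuous injection of the circle into the sphere (injective because $\zeta$ is injective on $\mbb R$ and omits $\infty$), hence a homeomorphism onto the Jordan curve $\hat\Gamma=\Gamma\cup\{\infty\}$. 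The Jordan curve theorem then yields two open complementary components on the sphere, and deleting $\infty$ exhibits $\mbb C\setminus\Gamma$ as a disjoint union of two open connected sets $\Omega_+$ and $\Omega_-$, each with boundary $\Gamma$. Since $\{\im w<-M\}$ is connected and disjoint from $\Gamma$ it lies in one of them, which I name $\Omega_-$ (``the region below the curve''); the other, $\Omega_+$, contains $\{\im w>M\}$.

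Next I would introduce the counting function $N(p)=\#\{w\in\mbb C^-:z(w)=p\}$ (with multiplicity) for $p\notin\Gamma$. The bound $|z(w)-w|\le M$ forces any solution into the disc $\{|w-p|\le M\}$ and forces $z(w)\to\infty$ as $w\to\infty$ in $\overline{\mbb C^-}$; since $z-p$ is holomorphic and not identically zero, $N(p)$ is finite. A standard continuity/compactness argument — solutions cannot accumulate on $\mbb R$, where $z=\zeta\ne p$, and cannot escape to infinity — then shows $N$ is locally constant on $\mbb C\setminus\Gamma$, hence constant on each of $\Omega_+,\Omega_-$. To evaluate it I would compute at convenient points. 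On $\Omega_+$, taking $p$ with $\im p>M$ gives $\im z(w)\le\im w+M<M<\im p$ for $w\in\mbb C^-$, so $N\equiv0$ there. On $\Omega_-$, taking $p$ with $\im p=-L$ for large $L>M$ and the rectangle $Q=\{|\re w|<R,\ -2L<\im w<0\}$ with $R>|\re p|+M$ (so $Q\supseteq\{|w-p|\le M\}$), the argument principle identifies $N(p)$ with the winding number of $z(\partial Q)$ about $p$; on the three non‑real sides $|z(w)-w|\le M<|w-p|$, and on the top side both $z(w)-p=\zeta(\alpha)-p$ and $w-p$ have positive imaginary part, so the linear homotopy $(w-p)+(1-t)(z(w)-w)$ never vanishes on $\partial Q$ and the winding number equals that of $w\mapsto w-p$, namely $1$. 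Hence $N\equiv1$ on $\Omega_-$. (A shortcut: the family $z_s(w)=(1-s)w+sz(w)$ keeps $|z_s(w)-w|\le M$, so for $p$ with $\im p<-M$ the count is independent of $s$ and equals its value $1$ at $s=0$.)

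Finally I would assemble the conclusion. Since $z$ is nonconstant holomorphic it is an open map, so $z(\mbb C^-)$ is open; by the previous step it is disjoint from $\Omega_+$, and it cannot contain a point of $\Gamma$ because every neighborhood of such a point meets $\Omega_+$. Therefore $z(\mbb C^-)\subseteq\Omega_-$, and $N\equiv1$ on $\Omega_-$ says exactly that $z\colon\mbb C^-\to\Omega_-$ is a bijection; multiplicity one at every point also rules out zeros of $z'$ in $\mbb C^-$, so the inverse is holomorphic and $z$ is an analytic isomorphism onto $\Omega_-$, the region below $\Gamma$.

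I expect the real work to sit in the degree count (and the continuity input behind it): choosing a truncating contour for the unbounded domain $\mbb C^-$, controlling $z-p$ near the curve $\Gamma$ — which I sidestep by first moving $p$ far from $\Gamma$ and then invoking the constancy of $N$ on components — and fixing orientations so the winding number comes out exactly $+1$. The topological ingredient (the Jordan curve theorem applied to the one‑point compactification of $\Gamma$) is standard but deserves a careful statement, since the curve is proper rather than compact in $\mbb C$.
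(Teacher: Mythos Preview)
Your argument is sound. Note, however, that the paper does not actually prove Theorem~\ref{thm:conformal}: it is quoted from the companion paper \cite{quasi:ivp}, so there is no in-paper proof to compare against. Judged on its own merits, your proof is correct and the strategy---compactify $\Gamma$ through $\infty$ to invoke the Jordan curve theorem, then use an argument-principle count on a large rectangle with top edge on $\mbb R$ and a linear homotopy to the identity---is a clean and standard route for this kind of statement.

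A few small remarks. First, the identification of $\Omega_+$ and $\Omega_-$ as \emph{distinct} components deserves one sentence: since the Jordan curve $\hat\Gamma$ passes through $\infty$, one has $\mbb C\setminus\Gamma=\hat{\mbb C}\setminus\hat\Gamma$, which the Jordan curve theorem splits into exactly two pieces; alternatively, your computation $N=0$ on one side and $N=1$ at a point with $\im p<-M$ already separates them. Second, your local-constancy step is fine but would read more smoothly if you spelled out the uniform contour: for $p_0\notin\Gamma$ and $|p-p_0|<\delta<\operatorname{dist}(p_0,\Gamma)$, every preimage lies in $D(p_0,M+\delta)\cap\mbb C^-$, so a single rectangle $Q$ with top edge on $\mbb R$ and the other three sides outside $D(p_0,M+\delta)$ works for all such $p$, and $N(p)=\frac{1}{2\pi i}\oint_{\partial Q}\frac{z'}{z-p}$ is then continuous, hence constant, in $p$. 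Third, the hypothesis that $z$ is analytic on $\mbb C^-_\veps$ (not merely on $\mbb C^-$) is exactly what licenses placing the top edge of $Q$ on $\mbb R$ and applying the argument principle there; you use this implicitly and it would be worth saying so.
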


\begin{cor}\label{cor:conformal}
  Suppose $k>0$ is irrational,
  $\tilde\eta(\alpha_1,\alpha_2)=\sum_{(j_1,j_2)\in\mbb Z^2}
  \hat\eta_{j_1,j_2}e^{i(j_1\alpha_1+j_2\alpha_2)}$, and there exist constants $C$
          and $\veps>0$ such that
  \begin{equation}
    \hat\eta_{-j_1,-j_2}=\overline{\hat\eta_{j_1,j_2}}, \qquad
    \big|\hat\eta_{j_1,j_2}\big|\le Ce^{-3\veps K\max(|j_1|,|j_2|)},
    \qquad\quad
    (j_1,j_2)\in\mbb Z^2,
  \end{equation}
  where $K=\max(k,1)$.  Let $x_0$ be real and define
  $\tilde\xi=x_0+H[\tilde\eta]$, $\tilde\zeta= \tilde\xi+i\tilde\eta$
  and
  \begin{equation}
    \tilde z(\alpha_1,\alpha_2,\beta) = x_0 + i\hat\eta_{0,0} +
    \sum_{j_1+j_2k<0} 2i\hat\eta_{j_1,j_2}e^{-(j_1+j_2k)\beta}e^{i(j_1\alpha_1+
        j_2\alpha_2)}, \qquad (\beta<\veps),
  \end{equation}
  where the sum is over all integer pairs $(j_1,j_2)$ satisfying the
  inequality.
  Suppose also that for each fixed $\theta\in[0,2\pi)$, the function
  $\alpha\mapsto\zeta(\alpha;\theta)= \alpha+\tilde
  \zeta(\alpha,\theta+k\alpha)$ is injective from $\mbb R$ to $\mbb C$
  and $\zeta_\alpha(\alpha;\theta)\ne0$ for $\alpha\in\mbb R$. Then
  for each $\theta\in\mbb R$, the curve $\zeta(\alpha;\theta)$
  separates the complex plane into two regions and
  \begin{equation}
    z(\alpha+i\beta;\theta) = (\alpha+i\beta) +
    \tilde z(\alpha,\theta+k\alpha,\beta), \qquad
    (\beta<\veps)
  \end{equation}
  is an analytic isomorphism of the lower half-plane onto the
  region below $\zeta(\alpha;\theta)$. Moreover, there is a constant
  $\delta>0$ such that $|z_w(w;\theta)|\ge\delta$ for
  $\im w\le 0$ and $\theta\in\mbb R$.
\end{cor}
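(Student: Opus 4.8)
The plan is to derive the corollary from Theorem~\ref{thm:conformal}, applied separately to each curve $\zeta(\cdot;\theta)$, and then to obtain the uniform derivative bound by a compactness argument transferred to the torus.

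First I would verify, for each fixed $\theta$, the hypotheses of Theorem~\ref{thm:conformal} for the map $w\mapsto z(w;\theta)=(\alpha+i\beta)+\tilde z(\alpha,\theta+k\alpha,\beta)$. Substituting $(\alpha_1,\alpha_2)=(\alpha,\theta+k\alpha)$ into the series for $\tilde z$ collapses the exponentials and gives $z(w;\theta)=w+x_0+i\hat\eta_{0,0}+\sum_{j_1+j_2k<0}2i\hat\eta_{j_1,j_2}e^{ij_2\theta}e^{i(j_1+j_2k)w}$, a series of entire functions of $w$. For $\im w<\veps$ we have $|e^{i(j_1+j_2k)w}|=e^{-(j_1+j_2k)\im w}\le e^{|j_1+j_2k|\veps}\le e^{2K\veps\max(|j_1|,|j_2|)}$, using $j_1+j_2k<0$; combined with the decay bound $|\hat\eta_{j_1,j_2}|\le Ce^{-3\veps K\max(|j_1|,|j_2|)}$ the series converges absolutely and uniformly on $\mbb C^-_\veps$, so $z(\cdot;\theta)$ is analytic there, and the same estimate bounds the tail, yielding $|z(w;\theta)-w|\le M$ with $M$ independent of $w\in\mbb C^-_\veps$ and of $\theta$. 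On $\mbb R$ one has $z(\alpha;\theta)=\alpha+\tilde\zeta(\alpha,\theta+k\alpha)=\zeta(\alpha;\theta)$ by the form of the bounded analytic extension recorded in (\ref{eq:f:from:u}), and this is injective by hypothesis. Theorem~\ref{thm:conformal} then gives, for each $\theta\in[0,2\pi)$, that $\zeta(\cdot;\theta)$ separates $\mbb C$ into two regions and $z(\cdot;\theta)$ is an analytic isomorphism of the lower half-plane onto the region below it; since $\zeta(\cdot;\theta)$ and $z(\cdot;\theta)$ depend on $\theta$ only modulo $2\pi$, the conclusion extends to all $\theta\in\mbb R$.

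For the bound $|z_w(w;\theta)|\ge\delta$, I would write $z_w(\alpha+i\beta;\theta)=\wtil{z_w}(\alpha,\theta+k\alpha,\beta)$, where $\wtil{z_w}(\alpha_1,\alpha_2,\beta)=1+\sum_{j_1+j_2k<0}2i\hat\eta_{j_1,j_2}\,i(j_1+j_2k)e^{-(j_1+j_2k)\beta}e^{i(j_1\alpha_1+j_2\alpha_2)}$ is continuous on $\mbb T^2\times(-\infty,\veps)$ by the same estimate. The crucial observation is that, as $(\alpha,\theta)$ ranges over $\mbb R\times[0,2\pi)$, the pair $(\alpha\bmod2\pi,\,(\theta+k\alpha)\bmod2\pi)$ ranges over \emph{all} of $\mbb T^2$: to reach $(\alpha_1,\alpha_2)$ take $\alpha=\alpha_1$ and $\theta=\alpha_2-k\alpha_1\bmod2\pi$ (no equidistribution is needed). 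Evaluated at $\beta=0$, the hypothesis $\zeta_\alpha(\alpha;\theta)\ne0$ thus says exactly that $\wtil{z_w}(\cdot,\cdot,0)$ is nonvanishing on $\mbb T^2$; and since each $z(\cdot;\theta)$ is an analytic isomorphism of the open lower half-plane, $z_w(\cdot;\theta)\ne0$ there, so $\wtil{z_w}(\cdot,\cdot,\beta)$ is nonvanishing on $\mbb T^2$ for every $\beta<0$ too. Hence $h(\beta):=\min_{(\alpha_1,\alpha_2)\in\mbb T^2}|\wtil{z_w}(\alpha_1,\alpha_2,\beta)|$ is strictly positive for each $\beta\le0$, and $h$ is continuous in $\beta$ because $\wtil{z_w}$ is uniformly continuous on compacta. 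Finally the decay bound dominates the series for $\wtil{z_w}-1$ termwise by a summable sequence on $\{\beta\le0\}$, each term tending to $0$ as $\beta\to-\infty$, so $\wtil{z_w}\to1$ uniformly and $h(\beta)\to1$; choosing $\beta_0$ with $h>1/2$ on $(-\infty,\beta_0]$ and using positivity and continuity of $h$ on the compact interval $[\beta_0,0]$ gives $\delta:=\inf_{\beta\le0}h(\beta)>0$, whence $|z_w(w;\theta)|\ge\delta$ for all $\im w\le0$ and, by $2\pi$-periodicity in $\theta$, all $\theta\in\mbb R$.

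I expect this uniform lower bound to be the main obstacle, since the set $\{(w,\theta)\,:\,\im w\le0\}$ is noncompact in the $\alpha$-direction and naive compactness fails. The remedy is to pass to the torus variables $(\alpha_1,\alpha_2)$, on which the relevant set $\mbb T^2\times[\beta_0,0]$ is compact; the one delicate point is the equivalence between nonvanishing of $z_w$ along the full $\theta$-family of quasi-periodic lines and nonvanishing of $\wtil{z_w}$ on all of $\mbb T^2$, which is the surjective-covering remark above. The analyticity, the $\theta$-uniform bound $|z(w;\theta)-w|\le M$, and the ``separates the plane / analytic isomorphism'' assertions are then routine once the Fourier series is put in collapsed form and Theorem~\ref{thm:conformal} is applied.
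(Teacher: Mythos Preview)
Your argument is correct. Note, however, that the paper does not actually prove Corollary~\ref{cor:conformal}: both it and Theorem~\ref{thm:conformal} are stated in the appendix with the remark that they are ``proved in \cite{quasi:ivp},'' so there is no in-paper proof to compare against. Your route---verifying the hypotheses of Theorem~\ref{thm:conformal} for each $\theta$ via the collapsed Fourier series and the decay estimate, then obtaining the uniform lower bound on $|z_w|$ by passing to the torus function $\wtil{z_w}$ and combining compactness on $\mbb T^2\times[\beta_0,0]$ with the limit $\wtil{z_w}\to1$ as $\beta\to-\infty$---is the natural one and matches in spirit how the paper uses the torus representation elsewhere (cf.~(\ref{eq:f:tilde}) and (\ref{eq:z:tilde})). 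The surjectivity observation that $(\alpha,\theta)\mapsto(\alpha\bmod2\pi,(\theta+k\alpha)\bmod2\pi)$ covers all of $\mbb T^2$ is indeed the key step linking the $\theta$-family hypotheses to a statement on the full torus, and you have handled it cleanly.
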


We now prove a theorem and two corollaries that describe the
dynamics of traveling waves in conformal space under the evolution
equations (\ref{general_conformal}) for various choices of $C_1$.

\begin{theorem}\label{thm:trav:C1}
  Suppose $\tilde\eta_0(\alpha_1,\alpha_2)$ satisfies the
  torus version of (\ref{eq:trav:conf}) as well as the
  assumptions in Corollary~\ref{cor:conformal}. Define $\tilde\xi_0 =
  \Hilbert[\tilde\eta_0]$, $\tilde\zeta_0=\tilde\xi_0+i\tilde\eta_0$
  and $\tilde\varphi_0=c\tilde\xi_0$.  Let
  $\eta_0(\alpha;\theta)=\tilde\eta_0(\alpha,\theta+k\alpha)$,
  $\varphi_0(\alpha;\theta)=\tilde\varphi_0(\alpha,\theta+k\alpha)$,
  $\xi_0(\alpha;\theta)=\alpha+\tilde\xi_0(\alpha,\theta+k\alpha)$ and
  $\zeta_0=\xi_0+i\eta_0$.  Suppose that for each $\theta\in[0,2\pi)$,
  $\alpha\mapsto\zeta_0(\alpha;\theta)$ is injective, i.e.~none of the
  curves in the family (\ref{eq:trav:fam:eta}) self-intersect.  Then
  for each $\theta\in\mbb R$,
\begin{equation}\label{eq:zeta:phi:trav1}
  \zeta(\alpha,t;\theta) = \zeta_0(\alpha;\theta)+ct, \qquad
  \varphi(\alpha,t;\theta) = \varphi_0(\alpha;\theta)
\end{equation}
satisfy (\ref{general_conformal}) with $C_1=cP_0[\xi_\alpha/J]$.
\end{theorem}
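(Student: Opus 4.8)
The plan is to verify directly that, for each fixed $\theta$, the pair $(\zeta,\varphi)$ in (\ref{eq:zeta:phi:trav1}) satisfies every line of the system (\ref{general_conformal}) once $C_1$ is taken to be the spatial constant $cP_0[\xi_\alpha/J]$. First I would unpack the ansatz. Since $\zeta=\zeta_0(\cdot\,;\theta)+ct$ with $ct\in\mbb R$, the height $\eta=\im\zeta=\eta_0(\cdot\,;\theta)$ and the potential $\varphi=\varphi_0(\cdot\,;\theta)$ are time-independent, while $\xi=\re\zeta=\alpha+\tilde\xi_0(\alpha,\theta+k\alpha)+ct$, so $x_0(t)=ct$. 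Using $\tilde\xi_0=H[\tilde\eta_0]$ together with (\ref{eq:H:Htil})--(\ref{eq:H:commute}) gives $\xi_\alpha=1+H[\eta_\alpha]$ and $\varphi=cH[\eta]$, hence $\varphi_\alpha=c(\xi_\alpha-1)$. Since $H^2=-P$ (a consequence of (\ref{eq:H:def}), as $k$ is irrational), the defining relation $\psi=-H[\varphi]$ yields $\psi_\alpha=c\eta_\alpha$ and therefore $\chi=\psi_\alpha/J=c\eta_\alpha/J$. The auxiliary definitions $J=\xi_\alpha^2+\eta_\alpha^2$, $\kappa$ and $\chi$ are literally the same in (\ref{general_conformal}) and (\ref{eq:trav:conf}), so nothing extra is needed for them.

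The crux is a single Hilbert-transform identity for $1/z_{0,w}$. By the hypotheses (which include those of Corollary~\ref{cor:conformal}), for each $\theta$ the map $z_0(\alpha+i\beta;\theta)$ built from $\tilde\eta_0$ is a conformal isomorphism of $\mbb C^-$ onto the region below $\zeta_0(\cdot\,;\theta)$, with $z_{0,w}$ bounded and analytic on $\mbb C^-$, quasi-periodic on slices of constant depth, tending to $1$ at $-i\infty$, and satisfying $|z_{0,w}|\ge\delta>0$. Hence $1/z_{0,w}$ is also bounded and analytic on $\mbb C^-$ and quasi-periodic on slices, and on the real axis $1/z_{0,w}\big|_{\beta=0}=1/\zeta_{0,\alpha}=\overline{\zeta_{0,\alpha}}/J=\xi_\alpha/J-i\,\eta_\alpha/J$. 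Applying the representation (\ref{eq:f:from:u}) to this bounded analytic function, whose real part on $\mbb R$ is $\xi_\alpha/J$, gives
\begin{equation*}
  \frac{\xi_\alpha}{J}=P_0\Big[\frac{\xi_\alpha}{J}\Big]-H\Big[\frac{\eta_\alpha}{J}\Big].
\end{equation*}
Multiplying by $c$ and using $\chi=c\eta_\alpha/J$ yields the key relation $c\,\xi_\alpha/J=C_1-H[\chi]$ with $C_1=cP_0[\xi_\alpha/J]$, exactly the value in the statement; note also that $\xi_\alpha/J$ is time-independent, so $C_1$ is constant in both space and time.

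With this relation in hand the remaining equations follow by short computations. Writing $U/s_\alpha=-\psi_\alpha/J=-\chi$ and $V/s_\alpha=H(U/s_\alpha)+C_1=C_1-H[\chi]=c\,\xi_\alpha/J$, formula (\ref{eq:x0:evol}) gives $dx_0/dt=P_0[\xi_\alpha V/s_\alpha-\eta_\alpha U/s_\alpha]=P_0[c(\xi_\alpha^2+\eta_\alpha^2)/J]=c$, matching $x_0(t)=ct$; alternatively I would invoke the remark after (\ref{eq:x0:evol}) that the $\xi_t$-equation of (\ref{eq:xi:t:eta:t}) is then automatic. The $\eta_t$-equation reads $\eta_t=-\eta_\alpha H[\chi]-\xi_\alpha\chi+C_1\eta_\alpha=\eta_\alpha(C_1-H[\chi])-\xi_\alpha\chi=\eta_\alpha(c\,\xi_\alpha/J)-\xi_\alpha(c\,\eta_\alpha/J)=0$, matching $\eta_t=0$. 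Finally, for the $\varphi_t$-equation I would substitute $\psi_\alpha=c\eta_\alpha$, $\varphi_\alpha=c(\xi_\alpha-1)$ and $C_1-H[\chi]=c\,\xi_\alpha/J$ into the bracket and simplify using $J=\xi_\alpha^2+\eta_\alpha^2$: the group $\tfrac{\psi_\alpha^2-\varphi_\alpha^2}{2J}-\varphi_\alpha H[\chi]+C_1\varphi_\alpha$ collapses to $\tfrac{c^2}{2}-\tfrac{c^2}{2J}$, so the full bracket becomes $\tfrac{c^2}{2}-\tfrac{c^2}{2J}-g\eta+\tau\kappa$; applying $P$ kills the constant and leaves $-P\big[\tfrac{c^2}{2J}+g\eta-\tau\kappa\big]$, which vanishes by the torus version of (\ref{eq:trav:conf}) assumed of $\tilde\eta_0$. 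Since $\varphi_t=0$, this closes the verification.

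I expect the main obstacle to be twofold. First, legitimately applying (\ref{eq:f:from:u}) to $1/z_{0,w}$: this is exactly where the non-vanishing and analyticity of $z_{0,w}$ from Corollary~\ref{cor:conformal} are indispensable, since without $|z_{0,w}|\ge\delta$ the reciprocal need not be bounded or analytic, and one must also check that $1/z_{0,w}$ inherits the quasi-periodic-on-slices structure so that (\ref{eq:f:from:u}) genuinely applies. Second, the algebra in the last step: one must carefully confirm that the $-\varphi_\alpha H[\chi]+C_1\varphi_\alpha$ contribution from the unsteady Bernoulli term is precisely what is needed to convert $\tfrac{\psi_\alpha^2-\varphi_\alpha^2}{2J}$ into $\tfrac{c^2}{2}-\tfrac{c^2}{2J}$, so that (\ref{general_conformal}) reduces exactly to the steady-Bernoulli condition (\ref{eq:trav:conf}) and no leftover terms remain.
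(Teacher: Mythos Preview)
Your proposal is correct and follows essentially the same route as the paper. The only cosmetic difference is that the paper packages the key Hilbert-transform identity via the bounded analytic quotient $z_t/z_w=c/z_{0,w}$ and the matrix form of $\zeta_t/\zeta_\alpha$, whereas you work directly with $1/z_{0,w}$ and (\ref{eq:f:from:u}); both yield $c\,\xi_\alpha/J=C_1-H[\chi]$ with $C_1=cP_0[\xi_\alpha/J]$, and the verification of $\eta_t=0$, $dx_0/dt=c$, and the Bernoulli reduction to (\ref{eq:trav:conf}) is then identical.
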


\begin{proof}
  We have assumed the initial reconstruction of $\xi$ from $\eta$
  yields $\xi(\alpha,0;\theta)=\xi_0(\alpha;\theta)$, so $x_0(0)=0$ in
  (\ref{eq:xi:from:eta}). We need to show that $\eta_t=0$,
  $\varphi_t=0$ and $dx_0/dt=c$ in (\ref{general_conformal}), from
  which it follows that
  $\xi(\alpha,t;\theta)=\xi_0(\alpha;\theta)+ct$. Since
  $\tilde\xi_0=\Hilbert[\tilde\eta_0]$ and none of the curves in the
  family (\ref{eq:trav:fam:eta}) self-intersect,
  Theorem~\ref{thm:conformal} and Corollary~\ref{cor:conformal} above
  show that the holomorphic extension from $\zeta_0(\alpha;\theta)$ to
  $z_0(w;\theta)$ is an analytic isomorphism of the lower half-plane
  to the fluid region, and $1/|z_{0,w}|$ is uniformly bounded. In
  (\ref{general_conformal}), we define
  $\xi_\alpha=1+\Hilbert[\eta_\alpha]$, $\psi=-\Hilbert[\varphi]$,
  $J=\xi_\alpha^2+\eta_\alpha^2$ and $\chi=\psi_\alpha/J$. This
  formula for $\xi_\alpha$ gives the same result as differentiating
  $\xi(\alpha,t;\theta)$ in (\ref{eq:zeta:phi:trav1}) with respect to
  $\alpha$. From $\tilde\varphi_0=c\tilde\xi_0$ and
  $\hat\eta_{0,0}=0$, we have $\chi=c\eta_\alpha/J$.  The extension of
  $\zeta(\alpha,t;\theta)$ to the lower half-plane is
  $z(w,t;\theta)=[z_0(w;\theta)+ct]$.  We have not yet established
  that $\zeta(\alpha,t;\theta)$ solves (\ref{general_conformal}), but
  we know $z_t/z_w$ is bounded in the lower half-plane, so there is a
  $C_1$ such that
  \begin{equation}\label{eq:Hchi:C1}
    \begin{pmatrix}
      -H\chi + C_1 \\  -\chi
    \end{pmatrix} =
    \frac1J\begin{pmatrix}
    \xi_\alpha & \eta_\alpha \\
    -\eta_\alpha & \xi_\alpha
    \end{pmatrix}\begin{pmatrix}
      c \\ 0 \end{pmatrix},
  \end{equation}
  where the right-hand side represents complex division of $z_t$ by
  $z_\alpha$. Since $P_0\Hilbert\chi=0$, we learn from
  (\ref{eq:Hchi:C1}) that $C_1=cP_0[\xi_\alpha/J]$.  But
  $\xi_t$ and $\eta_t$ in (\ref{eq:xi:t:eta:t}) are obtained
  by multiplying (\ref{eq:Hchi:C1}) by
  $[\xi_\alpha,-\eta_\alpha;\eta_\alpha, \xi_\alpha]$, which gives
  $\xi_t=c$, $\eta_t=0$. Equation (\ref{eq:x0:evol}) is then
  $dx_0/dt=P_0[\xi_t]=c$. Finally, using $\chi=c\eta_\alpha/J$,
  $H\chi = C_1-c\xi_\alpha/J$, $\varphi_\alpha=c(\xi_\alpha-1)$ and
  $\psi_\alpha=c\eta_\alpha$ in (\ref{general_conformal}) gives
  \begin{equation}
    \begin{aligned}
      \varphi_t &= P\bigg[\frac{\psi_\alpha^2 - \varphi_\alpha^2}{2J} -
    \varphi_\alpha\Hilbert[\chi] + C_1\varphi_\alpha - g\eta +
    \tau\kappa\bigg] \\
      &= P\bigg[
        \frac{c^2\eta_\alpha^2 - c^2(\xi_\alpha^2-2\xi_\alpha+1)}{2J} +
        c\frac{c(\xi_\alpha-1)\xi_\alpha}{J} - g\eta + \tau\kappa\bigg] \\
      &= P\bigg[\frac{c^2}{2J}\Big(J-1\Big) - g\eta + \tau\kappa\bigg]
      = P\bigg[ -\frac{c^2}{2J}-g\eta + \tau\kappa\bigg] = 0,
    \end{aligned}
  \end{equation}
  where we used (\ref{eq:trav:conf}) in the last step.
\end{proof}

\begin{cor}
  Suppose $\tilde\zeta_0(\alpha_1,\alpha_2)$,
  $\tilde\varphi_0(\alpha_1,\alpha_2)$, $\zeta_0(\alpha;\theta)$ and
  $\varphi_0(\alpha;\theta)$ satisfy the hypotheses of
  Theorem~\ref{thm:trav:C1} and $\alpha_0(t)$ is any continuously
  differentiable, real-valued function.  Then
  \begin{equation}\label{eq:zeta:phi:a0}
    \zeta(\alpha,t;\theta) = \zeta_0(\alpha-\alpha_0(t);\theta)
    +ct, \qquad
    \varphi(\alpha,t; \theta) = \varphi_0(\alpha-\alpha_0(t);\theta)
  \end{equation}
  are solutions of (\ref{general_conformal}) with
  $C_1=cP_0[\xi_\alpha/J]-\alpha_0'(t)$. The corresponding
  solutions of the torus version of (\ref{general_conformal}) for
  this choice of $C_1$ are
  \begin{equation}\label{eq:zeta:phi:a0:torus}
    \begin{aligned}
      \tilde\zeta(\alpha_1,\alpha_2,t) &=
      \tilde\zeta_0\big(\alpha_1-\alpha_0(t),
        \alpha_2-k\alpha_0(t)\big) + ct - \alpha_0(t), \\
      \tilde\varphi(\alpha_1,\alpha_2,t) &=
      \tilde\varphi_0\big(\alpha_1-\alpha_0(t),
        \alpha_2-k\alpha_0(t)\big).
    \end{aligned}
  \end{equation}
\end{cor}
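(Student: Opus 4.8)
The plan is to deduce the corollary from Theorem~\ref{thm:trav:C1} together with the reparametrization symmetry of the system (\ref{general_conformal}) recorded in Section~\ref{sec:gov:ivp} after the formula for $V/s_\alpha$: if $(\eta,\varphi,x_0)$ solves (\ref{general_conformal}) with parameter $C_1(t)$, then $\bigl(\eta(\alpha-\alpha_0(t),t),\,\varphi(\alpha-\alpha_0(t),t),\,x_0(t)-\alpha_0(t)\bigr)$ solves it with $C_1(t)$ replaced by $C_1(t)-\alpha_0'(t)$. First I would make this symmetry precise. The auxiliary quantities $\xi_\alpha=1+\Hilbert[\eta_\alpha]$, $\psi=-\Hilbert[\varphi]$, $J$, $\chi=\psi_\alpha/J$, $\kappa$, and $\Hilbert[\chi]$ are all built from $\alpha$-differentiation, the Hilbert transform, and pointwise algebra, each of which commutes with the spatial shift $\alpha\mapsto\alpha-\alpha_0(t)$; hence for the shifted data each such quantity is just the corresponding unshifted quantity evaluated at $\alpha-\alpha_0(t)$. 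In the $\eta_t$ equation the chain rule adds a term $-\alpha_0'(t)\eta_\alpha(\alpha-\alpha_0(t),t)$ to $\partial_t[\eta(\alpha-\alpha_0(t),t)]$, and this is exactly the term produced by changing $C_1\eta_\alpha$ to $(C_1-\alpha_0')\eta_\alpha$. The same bookkeeping handles the $\varphi_t$ equation, because $C_1\varphi_\alpha$ and the chain-rule term $-\alpha_0'\varphi_\alpha$ both sit inside the projection $P$, which commutes with the shift and fixes $\varphi_\alpha$ (the latter having zero mean). Finally, writing $\xi_t=\xi_\alpha(V/s_\alpha)-\eta_\alpha(U/s_\alpha)$ as in (\ref{eq:xi:t:eta:t}) and using $P_0[\xi_\alpha]=1$, one checks that for the shifted data the right-hand side of (\ref{eq:x0:evol}) equals its unshifted value minus $\alpha_0'(t)$, consistent with replacing $x_0(t)$ by $x_0(t)-\alpha_0(t)$; this also makes the first row of (\ref{eq:xi:t:eta:t}) agree with the time derivative of the reconstructed $\xi(\alpha,t;\theta)$.

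Granting the symmetry, the first assertion of the corollary is immediate. The solution $\zeta(\alpha,t;\theta)=\zeta_0(\alpha;\theta)+ct$, $\varphi(\alpha,t;\theta)=\varphi_0(\alpha;\theta)$ of Theorem~\ref{thm:trav:C1} has $C_1=cP_0[\xi_\alpha/J]$ and $x_0(t)=ct$, so shifting $\alpha\mapsto\alpha-\alpha_0(t)$ produces the solution (\ref{eq:zeta:phi:a0}) with $C_1=cP_0[\xi_\alpha/J]-\alpha_0'(t)$ and $x_0(t)=ct-\alpha_0(t)$; the constant $P_0[\xi_\alpha/J]$ is unchanged by the shift because $\xi_\alpha$ and $J$ merely shift and $P_0$ is translation-invariant. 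No new analytic input is needed, since the injectivity of each $\zeta_0(\cdot;\theta)$ and the decay bounds from Corollary~\ref{cor:conformal} assumed here are precisely those used in Theorem~\ref{thm:trav:C1}.

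For the torus statement I would rewrite (\ref{eq:zeta:phi:a0}) in the form $\zeta(\alpha,t;\theta)=\alpha+\tilde\zeta(\alpha,\theta+k\alpha,t)$. Expanding $\zeta_0(\alpha-\alpha_0(t);\theta)+ct=(\alpha-\alpha_0(t))+\tilde\zeta_0\bigl(\alpha-\alpha_0(t),\,\theta+k\alpha-k\alpha_0(t)\bigr)+ct$ and collecting the non-periodic part into the secular term $\alpha$ and the constant $x_0(t)=ct-\alpha_0(t)$ gives $\tilde\zeta(\alpha_1,\alpha_2,t)=\tilde\zeta_0\bigl(\alpha_1-\alpha_0(t),\alpha_2-k\alpha_0(t)\bigr)+ct-\alpha_0(t)$ and, similarly, $\tilde\varphi(\alpha_1,\alpha_2,t)=\tilde\varphi_0\bigl(\alpha_1-\alpha_0(t),\alpha_2-k\alpha_0(t)\bigr)$, which are exactly (\ref{eq:zeta:phi:a0:torus}). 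That these satisfy the torus version of (\ref{general_conformal}) for the stated $C_1$ follows from the equivalence, discussed in Section~\ref{sec:gov:ivp}, between that torus system and the one-dimensional system restricted to characteristic lines $\alpha\mapsto(\theta_1+\alpha,\theta_2+k\alpha)$: the torus shift $(\alpha_1,\alpha_2)\mapsto(\alpha_1-\alpha_0(t),\alpha_2-k\alpha_0(t))$ intertwines with the $\alpha$-shift used above.

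The step I expect to require the most care is the precise statement and verification of the reparametrization symmetry --- in particular tracking how the secular term $\alpha$, the constant $x_0(t)$, and the $C_1$ appearing in both the $\eta_t$ and $\varphi_t$ equations transform, and confirming that the $x_0$-ODE (\ref{eq:x0:evol}) remains self-consistent after the shift. Everything else is routine substitution.
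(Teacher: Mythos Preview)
Your proposal is correct and follows essentially the same approach as the paper: both use that $\partial_\alpha$ and $H$ commute with $\alpha$-translations so that the shifted data reproduces the same auxiliary quantities, then match the chain-rule terms $-\alpha_0'(t)\eta_\alpha$, $-\alpha_0'(t)\varphi_\alpha$ against the change $C_1\mapsto C_1-\alpha_0'(t)$, and finally pass to the torus via the characteristic-line equivalence. The only difference is cosmetic ordering---you first isolate the reparametrization symmetry as a general principle and then specialize, whereas the paper computes directly on the specific stationary solution from Theorem~\ref{thm:trav:C1}.
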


\begin{proof}
  Since $\pa_\alpha$ and $\Hilbert$ commute with
  $\alpha$-translations, substitution of
  $\eta_0(\alpha-\alpha_0(t);\theta)$ and
  $\varphi_0(\alpha-\alpha_0(t);\theta)$ in the right-hand sides of
  (\ref{general_conformal}) without changing $C_1$ would still lead to
  $\eta_t=0$, $\varphi_t=0$ and $dx_0/dt=c$, and
  (\ref{eq:xi:t:eta:t}) would still give $\xi_t=c$. Including
  $-\alpha_0'(t)$ in $C_1$ leads instead to
  $\eta_t=-\alpha_0'(t)\eta_\alpha$ and
  $\varphi_t=-\alpha_0'(t)\varphi_\alpha$ in (\ref{general_conformal})
  and $\xi_t=c-\alpha_0'(t)\xi_\alpha$ in
  (\ref{eq:xi:t:eta:t}), which are satisfied by
  (\ref{eq:zeta:phi:a0}). It also leads to $dx_0/dt=[c-\alpha_0'(t)]$
  in (\ref{eq:x0:evol}), which keeps the reconstruction of $\xi$ from
  $\eta$ via (\ref{eq:xi:from:eta}) consistent with the evolution
  equation for $\xi_t$.
  
  The functions in (\ref{eq:zeta:phi:a0}) and
  (\ref{eq:zeta:phi:a0:torus}) are related by
  \begin{equation}\label{eq:zeta:phi:a0:2}
    \zeta(\alpha,t;\theta)=\alpha+\tilde\zeta(\alpha,\theta+k\alpha,t), \qquad
    \varphi(\alpha,t;\theta)=
    \tilde\varphi(\alpha,\theta+k\alpha,t).
  \end{equation}
  Applying the 1d version of (\ref{general_conformal}) to
  (\ref{eq:zeta:phi:a0:2}) is equivalent to applying the torus version
  of (\ref{general_conformal}) to (\ref{eq:zeta:phi:a0:torus}) and
  evaluating at $(\alpha,\theta+k\alpha,t)$. Since
  (\ref{eq:zeta:phi:a0}) satisfies the 1d version of
  (\ref{general_conformal}) and every point
  $(\alpha_1,\alpha_2)\in\mbb T^2$ can be written as
  $(\alpha,\theta+k\alpha)$ for some $\alpha$ and $\theta$,
  (\ref{eq:zeta:phi:a0:torus}) satisfies the torus version
  of (\ref{general_conformal}).
\end{proof}

\begin{cor}\label{cor:trav:conf}
  Suppose $\tilde\zeta_0(\alpha_1,\alpha_2)$,
  $\tilde\varphi_0(\alpha_1,\alpha_2)$, $\zeta_0(\alpha;\theta)$ and
  $\varphi_0(\alpha;\theta)$ satisfy the hypotheses of
  Theorem~\ref{thm:trav:C1} and $\xi_{0,\alpha}(\alpha;\theta)>0$ for
  $\alpha\in[0,2\pi)$ and $\theta\in[0,2\pi)$. Then if $C_1$ is
  chosen as in (\ref{eq:C1:opt2}) to maintain $\tilde\xi(0,0,t)=0$,
  the solution of the torus version of (\ref{general_conformal}) with
  initial conditions
  \begin{equation}\label{eq:init:trav}
    \tilde\zeta(\alpha_1,\alpha_2,0) = \tilde\zeta_0(\alpha_1,\alpha_2), \qquad
    \tilde\varphi(\alpha_1,\alpha_2,0) = \tilde\varphi_0(\alpha_1,\alpha_2)
  \end{equation}
  has the form (\ref{eq:zeta:phi:a0:torus}) with
  \begin{equation}\label{eq:a0:ct}
    \alpha_0(t)=ct-\mc A_0(-ct,-kct),
  \end{equation}
  where $\mc A_0(x_1,x_2)$ is defined implicitly by
  \begin{equation}\label{eq:mcA:trav}
    \mc A_0(x_1,x_2) + \tilde\xi_0\big(x_1+\mc A_0(x_1,x_2)\,,\,
    x_2+k\mc A_0(x_1,x_2)\big) = 0, \qquad
      (x_1,x_2)\in\mbb T^2.
  \end{equation}
\end{cor}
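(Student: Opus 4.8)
The plan is to obtain this corollary as a bookkeeping refinement of the corollary immediately following Theorem~\ref{thm:trav:C1}. That result already exhibits, for \emph{every} continuously differentiable $\alpha_0(t)$, a solution of the torus version of (\ref{general_conformal}) of the form (\ref{eq:zeta:phi:a0:torus}), carrying the coefficient $C_1 = cP_0[\xi_\alpha/J] - \alpha_0'(t)$; at $\alpha_0\equiv 0$ this is the stationary solution of Theorem~\ref{thm:trav:C1}, which has initial data (\ref{eq:init:trav}), and every member with $\alpha_0(0)=0$ has the same initial data and is a reparametrization $\alpha\mapsto\alpha-\alpha_0(t)$ of it. So the only tasks are: (i) check that the function $\mc A_0$ defined by (\ref{eq:mcA:trav}) is a bona fide real-analytic function on $\mbb T^2$, so that $\alpha_0(t) = ct - \mc A_0(-ct,-kct)$ is $C^1$ and the preceding corollary applies to it; and (ii) check that this particular $\alpha_0$ realizes the normalization $\tilde\xi(0,0,t)\equiv 0$ that the choice (\ref{eq:C1:opt2}) of $C_1$ enforces.

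For (i), fix $(x_1,x_2)\in\mbb T^2$, set $\theta=x_2-kx_1$, and substitute $\alpha = x_1+\mc A_0(x_1,x_2)$ into (\ref{eq:mcA:trav}); using $\xi_0(\alpha;\theta)=\alpha+\tilde\xi_0(\alpha,\theta+k\alpha)$, the equation collapses to $\xi_0(\alpha;\theta)=x_1$. Since $\xi_{0,\alpha}(\cdot\,;\theta)>0$ by hypothesis and the torus part $\tilde\xi_0$ is bounded, $\alpha\mapsto\xi_0(\alpha;\theta)$ is an increasing real-analytic bijection of $\mbb R$; its inverse yields a unique $\alpha$, so $\mc A_0(x_1,x_2)=\alpha-x_1$ is well defined, and a short $2\pi$-periodicity check in $x_1,x_2$ (using periodicity of $\tilde\xi_0$ in both slots and the definition of $\theta$) shows it descends to $\mbb T^2$. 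The implicit function theorem gives real analyticity because the linearization of (\ref{eq:mcA:trav}) in its unknown is $1+(\partial_{\alpha_1}+k\partial_{\alpha_2})\tilde\xi_0=\xi_{0,\alpha}>0$; this is exactly the inverse map (\ref{eq:mcA:meaning}) of $\bds x=\bds\alpha+\bds k\tilde\xi_0(\bds\alpha)$. Since $\tilde\eta_0$ is even, $\tilde\xi_0=H[\tilde\eta_0]$ is odd, so $\tilde\xi_0(0,0)=0$ and hence $\mc A_0(0,0)=0$; thus $\alpha_0(t)=ct-\mc A_0(-ct,-kct)$ is real analytic in $t$ with $\alpha_0(0)=0$, and the preceding corollary yields (\ref{eq:zeta:phi:a0:torus}) satisfying the initial conditions (\ref{eq:init:trav}).

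For (ii), take real parts in (\ref{eq:zeta:phi:a0:torus}) and evaluate at the origin: $\tilde\xi(0,0,t)=\tilde\xi_0(-\alpha_0(t),-k\alpha_0(t))+ct-\alpha_0(t)$. Writing $\delta_0(t):=ct-\alpha_0(t)=\mc A_0(-ct,-kct)$ and $-\alpha_0(t)=-ct+\delta_0(t)$, this becomes $\tilde\xi_0\bigl(-ct+\delta_0(t),\,-kct+k\delta_0(t)\bigr)+\delta_0(t)$, which vanishes for all $t$ by (\ref{eq:mcA:trav}) evaluated at $(x_1,x_2)=(-ct,-kct)$. Now, any two solutions of (\ref{general_conformal}) with the same $\eta$- and $\varphi$-data differ only by a reparametrization $\alpha\mapsto\alpha-\alpha_0(t)$ with $\alpha_0(0)=0$, and within this family the condition that $\tilde\xi(0,0,t)$ stay constant determines $\alpha_0$ uniquely (differentiating the displayed identity gives the ODE $\alpha_0'=c/\xi_{0,\alpha}(-\alpha_0;0)$, which is nondegenerate precisely because $\xi_{0,\alpha}>0$). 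Since the choice (\ref{eq:C1:opt2}) of $C_1$ is by construction the one enforcing $\tilde\xi(0,0,t)=\tilde\xi(0,0,0)=0$, and we have just exhibited the member of the family with $\tilde\xi(0,0,t)\equiv 0$, the two coincide; this is the asserted form (\ref{eq:a0:ct})--(\ref{eq:mcA:trav}). (Alternatively, one may simply invoke well-posedness of the torus initial value problem from \cite{quasi:ivp}.)

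The one genuinely delicate step is (i) --- establishing that $\mc A_0$ is single-valued and real analytic on the torus --- which is exactly where the no-vertical-tangent hypothesis $\xi_{0,\alpha}>0$ on the whole family (\ref{eq:trav:fam:eta}) is needed; after that, the corollary is substitution into the formulas supplied by the preceding corollary, combined with the defining property of the normalization (\ref{eq:C1:opt2}).
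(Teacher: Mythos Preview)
Your proposal is correct and follows essentially the same route as the paper: invoke the preceding corollary to get the one-parameter family (\ref{eq:zeta:phi:a0:torus}), then pick out the member with $\tilde\xi(0,0,t)\equiv0$ via the implicit function $\mc A_0$, and check $\alpha_0(0)=0$ from the oddness of $\tilde\xi_0$. The only difference is that the paper defers the existence and real-analyticity of $\mc A_0$ to \cite{quasi:ivp} rather than sketching the implicit-function-theorem argument you give in (i), and it handles your uniqueness step (ii) in one sentence (``since $\tilde\xi(0,0,t)=0$, $C_1$ satisfies (\ref{eq:C1:opt2})'') rather than via the ODE $\alpha_0'=c/\xi_{0,\alpha}(-\alpha_0;0)$; your added detail is sound and welcome but not a different method.
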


\begin{proof}
  The assumption that $\xi_{0,\alpha}(\alpha;\theta)>0$
  ensures that all the waves in the family $\zeta_0(\alpha;\theta)$
  are single-valued and have no vertical tangent lines. Under these
  hypotheses, it is proved in \cite{quasi:ivp} that there is a unique
  function $\mc A_0(x_1,x_2)$ satisfying (\ref{eq:mcA:trav}) and that it
  is real analytic and periodic. We seek a solution of the form
  (\ref{eq:zeta:phi:a0:torus}) satisfying $\tilde\xi(0,0,t)=0$,
  \begin{equation}
    \begin{aligned}
      & \tilde\xi(0,0,t) = 
     \tilde\xi_0(-\alpha_0(t),-k\alpha_0(t)) +ct - \alpha_0(t) \\
    & \qquad = [ct-\alpha_0(t)] + \tilde\xi_0\big(-ct+[ct-\alpha_0(t)],
      -kct+k[ct-\alpha_0(t)]\big) = 0.
    \end{aligned}
  \end{equation}
  Comparing with (\ref{eq:mcA:trav}), we find that
  $[ct-\alpha_0(t)]=\mc A_0(-ct,-kct)$, which is (\ref{eq:a0:ct}).
  Since $\tilde\eta_0(\alpha_1,\alpha_2)$ is even,
  $\tilde\xi_0=\Hilbert[\tilde\eta_0]$ is odd and $\mc A_0(0,0)=0$.
  Thus, $\alpha_0(0)=0$ and the initial conditions (\ref{eq:init:trav})
  are satisfied. Since $\xi(0,0,t)=0$, $C_1$ satisfies
  (\ref{eq:C1:opt2}).
\end{proof}


\bibliographystyle{abbrv}

\end{document}